\newtheorem{thm}{Theorem}
\newtheorem{lem}[thm]{Lemma}
\newtheorem{dfn}[thm]{Definition}
\newtheorem{theorem}{Theorem}[section]
\newtheorem{lemma}[theorem]{Lemma}
\newtheorem{definition}[theorem]{Definition}
\newtheorem{corollary}[theorem]{Corollary}
\newcommand{\x}{\ensuremath{\bm{x}}}
\newcommand{\y}{\ensuremath{\bm{y}}}
\newcommand{\thv}{\ensuremath{\bm{\theta}}}
\renewcommand{\O}{\ensuremath{\mathcal{O}}}
\newcommand{\X}{\ensuremath{\mathcal{X}}}
\newcommand{\eps}{\ensuremath{\varepsilon}}
\newcommand{\A}{\ensuremath{\mathcal{A}}}
\newcommand{\C}{\ensuremath{\mathcal{C}}}
\newcommand{\D}{\ensuremath{\mathcal{D}}}
\newcommand{\E}{\ensuremath{\mathbb{E}}}
\newcommand{\F}{\ensuremath{\mathcal{F}}}
\newcommand{\R}{\ensuremath{\mathbb{R}}}
\newcommand{\Z}{\ensuremath{\mathbb{Z}_{N}}}
\renewcommand\onecolumngrid{
\do@columngrid{one}{\@ne}%
\def\set@footnotewidth{\onecolumngrid}
\def\footnoterule{\kern-6pt\hrule width 1.5in\kern6pt}%
}
\begin{document}

\preprint{APS/123-QED}

\title{Shadows of quantum machine learning}

\author{Sofiene Jerbi}
\affiliation{Institute for Theoretical Physics, University of Innsbruck, Austria}
\affiliation{Dahlem Center for Complex Quantum Systems, Freie Universit\"{a}t Berlin, Germany}
\author{Casper Gyurik}
\affiliation{applied Quantum algorithms (aQa), Leiden University, The Netherlands}
\author{Simon C.~Marshall}
\affiliation{applied Quantum algorithms (aQa), Leiden University, The Netherlands}
\author{Riccardo Molteni}
\affiliation{applied Quantum algorithms (aQa), Leiden University, The Netherlands}
\author{Vedran Dunjko}
\affiliation{applied Quantum algorithms (aQa), Leiden University, The Netherlands}


\date{\today}

\begin{abstract}
\noindent Quantum machine learning is often highlighted as one of the most promising practical applications for which quantum computers could provide a computational advantage. However, a major obstacle to the widespread use of quantum machine learning models in practice is that these models, even once trained, still require access to a quantum computer in order to be evaluated on new data.\break To solve this issue, we introduce a new class of quantum models where quantum resources are only required during training, while the deployment of the trained model is classical. Specifically, the training phase of our models ends with the generation of a `shadow model' from which the classical deployment becomes possible. We prove that: \emph{i)} this class of models is universal for classically-deployed quantum machine learning; \emph{ii)} it does have restricted learning capacities compared to `fully quantum' models, but nonetheless \emph{iii)} it achieves a provable learning advantage over fully classical learners, contingent on widely-believed assumptions in complexity theory. These results provide compelling evidence that quantum machine learning can confer learning advantages across a substantially broader range of scenarios, where quantum computers are exclusively employed during the training phase. By enabling classical deployment, our approach facilitates the implementation of quantum machine learning models in various practical contexts.
\end{abstract}

\maketitle


\section{Introduction\label{sec:intro}}


\noindent Quantum machine learning is a rapidly growing field \cite{biamonte17,dunjko18,schuld18}\break driven by its potential to achieve quantum advantages in practical applications. A particularly interesting approach to make quantum machine learning applicable in the near term is to develop learning models based on parametrized quantum circuits \cite{benedetti19,cerezo21,bharti21}. Indeed, such quantum models have already been shown to achieve good learning performance in benchmarking tasks, both in numerical simulations \cite{schuld19b,schuld20b,liu18,jerbi21,skolik21} and on actual quantum hardware \cite{havlivcek19,zhu19,peters21,haug21}. Moreover, based on widely-believed cryptography assumptions, these models also hold the promise to solve certain learning tasks that are intractable for classical algorithms \cite{liu20,gyurik22}, including predicting ground state properties of highly-interacting quantum systems \cite{gyurik23b}.

Despite these advances, quantum machine learning is facing a major obstacle for its use in practice. A typical workflow of a machine learning model involved, e.g.,\break in driving autonomous vehicles, is divided into: (i) a \emph{training phase}, where the model is trained, typically using training data or by reinforcement; followed by (ii) a \emph{deployment phase}, where the trained model is evaluated on new input data. For quantum machine learning models, both of these phases require access to a quantum computer. But given that in many practical machine learning applications, the trained model is meant for a widespread deployment, the current scarcity of quantum computing access dramatically reduces the applicability of quantum machine learning. One way of addressing this problem is by generating \emph{shadow models} out of quantum machine learning models. That is, we propose inserting a \emph{shadowing phase} between the training and deployment, where a quantum computer is used to collect information on the quantum model. Then a classical computer can use this information to evaluate the model on new data during the deployment phase.

The conceptual idea of generating shadows of quantum models was already proposed by Schreiber \emph{et al.}~\cite{schreiber22}, albeit under the terminology of \emph{classical surrogates}. In that work, as well as in that of Landman \emph{et al.}~\cite{landman22}, the authors make use of the general expression of quantum models as trigonometric polynomials \cite{schuld21b} to learn the Fourier representation of trained models and evaluate them classically on new data. However, these works also suggest that a classical model could potentially be trained directly on the training data and achieve the same performance as the shadow model, thus circumventing the need for a quantum model in the first place. This raises the concern that all quantum models that are compatible with a classical deployment would also lose all quantum advantage, hence severely limiting the prospects for a widespread use of quantum machine learning.\\
Therefore, two natural open questions are raised: 
\begin{enumerate}[leftmargin=6mm]
    \item \emph{Can shadow models achieve a quantum advantage over entirely classical (classically trained and classically evaluated) models?}
    \item \emph{Do there exist quantum models that do not admit efficiently evaluatable shadow models?}
\end{enumerate}

\begin{figure*}
	\centering
	\includegraphics[width=\linewidth]{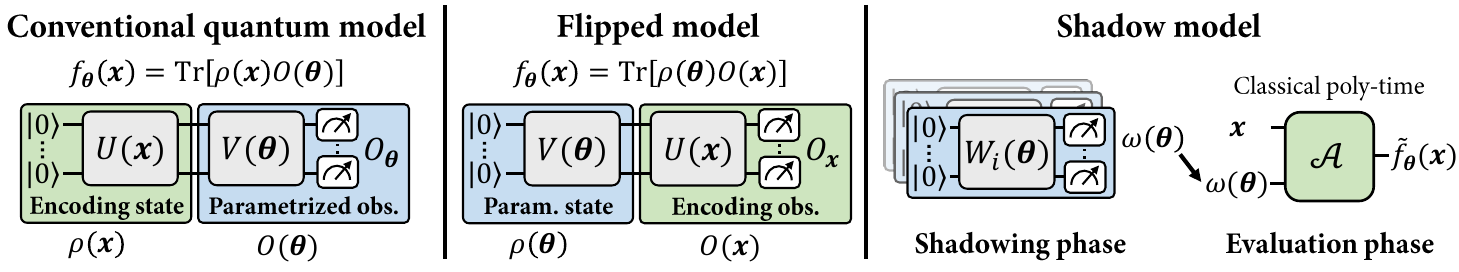}
    \vspace{-1.5em}
	\caption{\textbf{Quantum and shadow models.} (left) Conventional quantum models can be expressed as inner products between a data-encoding quantum state $\rho(\x)$ and a parametrized observable $O(\thv)$. The resulting linear model $f_{\thv}(\x)= \Tr[\rho(\x)O(\thv)]$ naturally corresponds to a quantum computation, depicted here. (middle) We define flipped models $f_{\thv}(\x)= \Tr[\rho(\thv)O(\x)]$ as quantum linear models where the role of the quantum state $\rho(\thv)$ and the observable $O(\x)$ is flipped compared to conventional models. (right) Flipped models are associated to natural shadow models: one can use techniques from shadow tomography to construct a classical representation $\hat{\rho}(\thv)$ of the parametrized state $\rho(\thv)$ (during the shadowing phase), such that, for encoding observables $O(\x)$ that are classically representable (e.g., linear combinations of Pauli observables), $\hat{\rho}(\thv)$ can be used by a classical algorithm to evaluate the model $f_{\thv}(\x)$ on new input data (during the evaluation phase). More generally, a shadow model is defined by (i) a shadowing phase where a (bit-string) advice $\omega(\thv)$ is generated by the evaluation of multiple quantum circuits $W_1(\thv), \ldots, W_M(\thv)$, and (ii) an evaluation phase where this advice is used by a classical algorithm $\A$, along with new input data $\x$ to evaluate their labels $\widetilde{f}_{\thv}(\x)$. In \cref{sec:general-shadow}, we show that under this general definition, all shadow models are shadows of flipped models.}
     \vspace{-0.5em}
	\label{fig:main}
\end{figure*}

\noindent In this work, we resolve both of these key open questions.\break We propose a general definition for shadow models, rooted in the fundamental idea that quantum machine learning models can be universally expressed as linear models \cite{jerbi21b}. This formulation of shadow models allows us to leverage various results and techniques from quantum information theory for the analysis of this model class. From a practical perspective, employing shadow tomography techniques \cite{huang20b,bertoni22,wan23,hu23} allows to easily construct diverse shadow models that will resonate with the practitioners of quantum machine learning. Furthermore, in our exploration of the computational capabilities of shadow models, we find them to capture a distinct computational class. Specifically, we demonstrate that, under widely-believed cryptography assumptions, there exist learning tasks where shadow models exhibit a provable quantum advantage over fully classical models. However, contrary to this advantage, we also establish that there exist quantum models that are strictly more powerful than the class of shadow models, based on common assumptions in complexity theory. 

For ease of exposition, we will first adhere to a working definition of a shadow model as a model that is trained on a quantum computer, but can be evaluated classically on new input data with the help of information generated by a quantum computer (i.e., quantum-generated advice) that is independent of the new data. We will (informally) call a model ``shadowfiable" if there exists a method of turning it into a shadow model. In \cref{sec:general-shadow}, we will make our definitions more precise.

\section{The flipped model}

The construction of our shadow models starts from a simple yet key observation: all standard quantum machine learning models for supervised learning can be expressed as linear models \cite{jerbi21b}. To delve into this claim, we first draw upon early works that utilized parametrized quantum circuits in machine learning \cite{havlivcek19,schuld19b}. These works proposed quantum models that are naturally expressed as linear functions of the form
\begin{equation}\label{eq:linear-models}
f_{\thv}(\x)  = \Tr[\rho(\x)O(\thv)]
\end{equation}
where $\rho(\x)$ are quantum states that encode classical data $\x\in\X$ and $O(\thv)$ are parametrized observables whose inner product with $\rho(\x)$ defines $f_{\thv}(\x)$ (see Fig.~\ref{fig:main}). In a regression task, one would use such a model to assign a real-valued label to an input $\x$, while in classification tasks, one would additionally apply, e.g., a sign function, to discretize its output into a class. From a circuit picture, such models can be evaluated on a quantum computer by:\break (i) preparing an initial state $\rho_0$, e.g., $\ket{0}\!\bra{0}^{\otimes n}$, (ii) evolving it under a data-dependent circuit $U(\x)$, (iii) followed by a variational circuit $V(\thv)$, (iv) before finally measuring the expectation value of a Hermitian observable $O$. Together, steps (i) and (ii) define
\begin{equation}
\rho(\x)=U(\x)\rho_0U^\dagger(\x),
\end{equation}
while steps (iii) and (iv) define
\begin{equation}\label{eq:obs}
O(\thv)=V^\dagger(\thv)OV(\thv).
\end{equation}
Since the early works, it is known that quantum linear models also capture quantum kernel models as a special case \cite{schuld21}, simply by making $O(\thv)$ directly dependent on the training data of the learning task. Perhaps more surprisingly, quantum linear models can also encompass more general data re-uploading models, composed of several layers of data encoding and variational processing $U_1(\x)V_1(\thv)U_2(\x)\ldots$ Indeed, data re-uploading models can be mapped to linear models through circuit transformations (e.g., gate teleportation) that relocate all data-encoding gates to the first layer of the circuit \cite{jerbi21b}.

\subsection{Flipped model definition}

The definition of a quantum linear model in \cref{eq:linear-models} can in general accommodate any pair of Hermitian operators in place of $\rho(\x),O(\thv)$. However, due to how these models are evaluated on a quantum computer, one commonly works under the constraint that $\rho(\x)$ defines a quantum state (i.e., a positive semi-definite operator with unit trace). Indeed, from an operational perspective, $\rho(\x)$ must be physically prepared on a quantum device before being measured with respect to the observable $O(\thv)$ (which only needs to be Hermitian in order to be a valid observable).

For reasons that will become clearer from the shadowing perspective, we define a so-called \emph{flipped model}, where we flip the role of $\rho(\x)$ and $O(\thv)$. That is, we consider
\begin{equation}\label{eq:flipped-model}
f_{\thv}(\x)  = \Tr[\rho(\thv)O(\x)]
\end{equation}
where $\rho(\thv)$ is a parametrized quantum state and $O(\x)$ is an observable that encodes the data and can take more general forms than \cref{eq:obs} as we will see next. This model also corresponds to a straightforward quantum computation as $\rho(\thv)$ can be physically prepared before being measured with respect to $O(\x)$.

A simple example of flipped model is for instance defined by:
\begin{equation}\label{eq:Pauli-shadow-model}
\rho(\thv) = V(\thv) \rho_0 V^\dagger(\thv)\quad \&\quad O(\x) = \sum_{j=1}^m w_j(\x) P_j
\end{equation}
for an initial state $\rho_0$, a variational circuit $V(\thv)$, and a collection of Pauli observables $\{P_j\}_{j=1}^m$ weighted by data-dependent weights $w_j(\x)\in\R$. One can evaluate this model by repeatedly preparing $\rho(\thv)$ on a quantum computer, measuring it in a Pauli basis specified by a $P_j$, and weighting the outcome by $w_j(\x)$. For other examples of flipped models, see Appendix \ref{appdx:shadow-models}.

As opposed to conventional quantum linear models, flipped models are well-suited to construct shadow models. Since the variational operators $\rho(\thv)$ are quantum states, one can straightforwardly use techniques from shadow tomography \cite{huang20b} to construct classical shadows $\hat{\rho}(\thv)$ of these states. What we call classical shadows $\hat{\rho}(\thv)$ here are collections of measurement outcomes obtained from copies of $\rho(\thv)$ that can be used to classically approximate expectation values of certain observables $O$ (for a certain restricted family). If we take these observables to be our data-dependent $O(\x)$, then we end up with a classical model $\widetilde{f}_{\thv}(\x)$ that approximates our flipped model. Note here that one has total freedom on the classical shadow techniques they may use to define their shadow models, and a plethora of protocols have already been proposed in the literature \cite{huang20b,bertoni22,wan23,hu23}. But it is important to keep in mind that each of these protocols comes with its limitations, as it may restrict the class of states $\rho(\thv)$ or the class of observables $O(\x)$ for which an \emph{efficient and faithful} shadow model can be constructed. By \emph{efficient} we refer here to the number of measurements performed on $\rho(\thv)$ and the time complexity of estimating the expectation values of observables $O(\x)$ from these measurements. And by \emph{faithful} we refer to the approximation error between the shadow model $\widetilde{f}_{\thv}(\x)$ resulting from the shadow protocol and the original flipped model $f_{\thv}(\x)$. For instance, in the example of \cref{eq:Pauli-shadow-model}, we know that if all Pauli operators $\{P_j\}_{j=1}^m$ are $k$-local,\break then $\widetilde{\O}\left(3^kB^2\eps^{-2}\right)$ measurements of $\rho(\thv)$, where $B~=~\max_{\x} \sum_{j=1}^m \abs{w_i(\x)}$, are sufficient to guarantee $\max_{\x} \big|\widetilde{f}_{\thv}(\x)-f_{\thv}(\x)\big|\leq\eps$ with high probability. But for non-local Pauli operators (i.e., large $k$), this protocol becomes highly inefficient if we want to guarantee a low error $\eps$.

Importantly, shadowfied flipped models are not limited to constructions based on classical shadow protocols. Given that the states $\rho(\thv)$ are not given to us a black-box (as is generally assumed in shadow tomography), one can use prior knowledge on these states to construct efficient shadowing procedure. For instance, if $\rho(\thv)$ is known to be a superposition of a tractable number of computational basis states, or well-approximated by a matrix product state (MPS) with low bond dimension, then efficient tomography protocols may be used \cite{cramer10}.

\subsection{Properties of flipped models}

Flipped models are a stepping stone toward the claims of quantum advantage and ``shadowfiability'' that are the focus of this paper. Nonetheless, they constitute a newly introduced model, which is why it is useful to understand first how they relate to previous quantum models and what learning guarantees they can have.

Since conventional linear models of the form of \cref{eq:linear-models} play a central role in quantum machine learning, we start by asking the question: when can these models be represented by (efficiently evaluatable) flipped models? That is, given a conventional model $f_{\thv}(\x)~=~\Tr[\rho(\x)O(\thv)]$, can we construct a flipped model $\widetilde{f}_{\thv}(\x)~=~\Tr[\rho'(\thv)O'(\x)]$ such that $\widetilde{f}_{\thv}(\x)~\approx~f_{\thv}(\x), \forall \x,\thv$, and $\widetilde{f}_{\thv}(\x)$ is as efficient to evaluate as $f_{\thv}(\x)$. Clearly, a conventional model $f_{\thv}(\x)$ for which the parametrized operator $O(\thv)$ is also a quantum state (i.e., a positive semi-definite trace-1 operator) is by definition also a flipped model. Therefore, a natural strategy to flip a conventional model is to transform its observable $O(\thv)$ into a quantum state $\rho'(\thv)$. This transformation involves dealing with the negative eigenvalues of $O(\thv)$, which is straightforward\footnote{The sign of the eigenvalues of the observable $O(\thv)$ can be taken into account using an auxiliary qubit, without overheads in the efficiency of evaluation. See Appendix \ref{appdx:flipping-bounds} for more details.}, as well as \emph{normalizing} these eigenvalues, which more importantly affects the efficiency of evaluating the resulting flipped model. Indeed, the normalization factor $\alpha$ that results from normalizing $O(\thv)$ corresponds to its trace norm $\norm{O}_1~=~\Tr\big[\sqrt{O^2}\big]$ and needs to be absorbed into the observable $O'(\x)~=~\alpha\rho(\x)$ of the flipped model $\widetilde{f}_{\thv}(\x)$ to guarantee $\widetilde{f}_{\thv}(\x)~=~f_{\thv}(\x)$. This directly impacts the spectral norm $\norm{O'}_\infty = \max_{\ket{\psi}} \expval{O'}_\psi = \alpha$ of the flipped model, and therefore the efficiency of its evaluation, as $\O(\norm{O'}_\infty^2/\eps^2)$ measurements of $\rho'(\thv)$ are needed in order to estimate $\widetilde{f}_{\thv}(\x)$ to additive error $\eps$ (see Appendix \ref{appdx:sample-complexity} for a derivation). Therefore, we end up showing that, for a conventional model $f_{\thv}(\x)$ acting on $n$ qubits and with a bounded observable trace norm $\norm{O}_1 \leq \alpha$, we can construct a flipped model acting on $m=n+1$ qubits and with observable spectral norm $\norm{O'}_\infty = \alpha$.

Interestingly, in the relevant regime where the number of qubits $n,m$ used by the linear models involved in this flipping is logarithmic in $\norm{O}_1$ (e.g., where $O$ is a Pauli observable and hence $\norm{O}_1=2^n$), we find that this requirement on the spectral norm $\norm{O'}_\infty$ of the resulting flipped model is unavoidable in the worst case, up to a logarithmic factor in $\norm{O}_1$. We refer to Appendix \ref{appdx:flipping-bounds} for proof of these statements and a more in-depth discussion.

Another property of interest in machine learning is the generalization performance of a learning model. That is, we want to bound the gap between the performance of the model on its training set (so-called training error) and its performance on the rest of the data space (or expected error). Such bounds have for instance been derived in terms of the number of encoding gates in the quantum model \cite{caro21}, or the rank of its observable \cite{gyurik23}. In the case of flipped model, we find instead a bound in terms of the number of qubits $n$ and the spectral norm $\norm{O}_\infty$ of the observable. Since these quantities are operationally meaningful, this gives us a natural way of controlling the generalization performance of our flipped models. Stated informally, we find that if a flipped model achieves a small error $\abs{f_{\thv}(\x) -f(\x)}\leq\eta$ for all $\x$ in a training set of size $M$, then we only need $M$ to scale as $\widetilde{\Omega}\left(\frac{n\norm{O}_\infty^2}{\eps\eta^2}\right)$ in order to guarantee a small expected error $\abs{f_{\thv}(\x) -f(\x)}\leq2\eta$ with probability $1-\eps$ over the entire data distribution.

Note that the dependence on $n$ and $\norm{O}_\infty$ is linear and quadratic, respectively, which means that we can afford a large number of qubits and a large spectral norm and still guarantee a good generalization performance. This is particularly relevant as the spectral norm is a controllable quantity, meaning we can easily fine-tune our models to perform well in training and generalize well. E.g., in the case of the model in \cref{eq:Pauli-shadow-model}, this spectral norm is bounded by $\max_{\x} \sum_{j=1}^m \abs{w_i(\x)}$, which scales favorably with the number of qubits $n$ if $m\in\O(\text{poly}(n))$ or if the vector $\bm{w}(\x)$ is sparse.

\subsection{Quantum advantage of a shadow model}

We recall that we (informally) define shadow models as models that are trained on a quantum computer, but, after a shadowing procedure that collects information on the trained model, are evaluated classically on new input data. In this section, we consider the question of achieving a quantum advantage using such shadow models. It may seem at first sight that this question has a straightforward answer, which is ``no": if the function learned by a model is classically computable, then there should be no room for a quantum advantage. However, as demonstrated in Refs.~\cite{gyurik22,servedio04}, one can also achieve a quantum advantage based on so-called \emph{trap-door functions}.\break These are functions that are believed to be hard to compute classically, unless given a key (or advice) that allows for an efficient classical computation. Notably, there exist trap-door functions where this key can be efficiently computed using a quantum computer, but not classically. This allows us to construct shadow models that make use of this quantum-generated key to compute an otherwise classically untractable function. 

Similarly to related results showing a quantum advantage in machine learning with classical data \cite{liu20,sweke21}, we consider a learning task where the target function (i.e., the function generating the training data) is derived from cryptographic functions that are widely believed to be\break hard to compute classically. More precisely, we introduce a variant of the discrete cube root learning task~\cite{gyurik22}, which is hard to solve classically under a hardness assumption related to that of the RSA cryptosystem~\cite{kearns94}. In this task, we consider target functions defined on $\Z=\{0,\ldots, N-1\}$~as
\begin{equation}
	g_s(\x) = \begin{cases}
    1,&\text{if }\sqrt[3]{\x} \text{ mod } N \in[s,s+\frac{N-1}{2}],\\ 0, &\text{otherwise}
    \end{cases}
\end{equation}
where $N=pq$ is an $n$-bit integer, product of two primes $p,\ q$ of the form $3k+2,\ 3k'+2$, such that the discrete cube root is properly defined as the inverse of the function $\y^3 \text{ mod } N$. These target functions are particularly appealing because of a number of interesting properties:
\begin{enumerate}[label=(\roman*),leftmargin=8mm]
    \item It is believed that given only $\x$ and $N$ as input, computing $g(\x)=\sqrt[3]{\x} \text{ mod } N$ with high probability of success over random draws of $\x$ and $N$ is classically intractable. This assumption is known as the discrete cube root (DCR) assumption.
    \item On the other hand, computing $\x^a \text{ mod } N$ is classically efficient for any $a\in \Z$. For $a=3$, this implies that $g^{-1}(\y) = \y^3 \text{ mod } N$ is a one-way function, under the DCR assumption.
    \item The function $g(\x)=\sqrt[3]{\x} \text{ mod } N$ has a ``trap-door'', in that there exists another way of computing it efficiently. For every $N$ (as specified above), there exists a \emph{key} $d\in\Z$ such that $g(\x)~=~\x^d~\text{ mod }~N$. Finding $d$ is efficient quantumly by using Shor's factoring algorithm \cite{shor99}, but hard classically under the DCR assumption.
\end{enumerate}

Observations (i) and (ii) can be leveraged to show that learning the functions $g_s$ from examples is also intractable. Indeed, Alexi \emph{et al.} \cite{alexi88} showed that a classical algorithm that could faithfully capture a single bit $g_s(\x)$ of the discrete cube root of $\x$, for even a $1/2 + 1/\text{poly}(n)$ fraction of all $\x\in\Z$, could also be used to reconstruct $g(\x),\ \forall \x\in\Z$, with high probability of success. Since, from observation (ii), the training data for the learning algorithm can also be generated efficiently classically from $N$, a classical learner that learns $g_s(\x)$ correctly for a $1/2 + 1/\text{poly}(n)$ fraction of all $\x\in\Z$ would then contradict the DCR assumption.

Observation (iii) allows us to define the following flipped model:
\begin{gather}\label{eq:cube_root-model}
f_{\thv}(\x)  = \Tr[\rho(\thv)O(\x)]\\
\rho(\thv)\!=\!\ket{d',s'}\!\bra{d',s'}\ \&\ O(\x) = \sum_{d',s'} \widehat{g}_{d',s'}(\x) \ket{d',s'}\!\bra{d',s'}.\nonumber
\end{gather}
That is, $\rho(\thv)$ (for $\thv=(N,s')$) specifies candidates for the key $d'$ and the parameter $s'$ of interest, while $O(\x)$ uses that information to compute
\begin{equation}
	\widehat{g}_{d',s'}(\x) = \begin{cases}
    1,&\text{if }\x^{d'} \text{ mod } N \in[s',s'+\frac{N-1}{2}],\\ 0, &\text{otherwise.}
    \end{cases}
\end{equation}
The state $\rho(\thv) = \ket{d,s'}\!\bra{d,s'}$ for the right key $d$ can be prepared efficiently using Shor's algorithm applied on $N$ (provided with the training data). As for $O(\x)$, it simply processes classically a bit-string to compute $\widehat{g}_{d',s'}(\x)$ efficiently, which corresponds to $g_s(\x)$ when $(d',s')=(d,s)$. Finding an $s'$ close to $s$ is an easy task given training data and $d'=d$. Since $\rho(\thv)$ is a computational basis state, this flipped model admits a trivial shadow model where a single computational basis measurement of $\rho(\thv)$ allows to evaluate $f_{\thv}(\x)$ classically for all $\x$. Therefore, we end up showing the following theorem:

\begin{thm}[Quantum advantage (informal)]\label{thm:quantum-advantage}
There exists a learning task where a shadow model first trained using a quantum computer then evaluated classically on new input data, can achieve an arbitrarily good learning performance, while any fully classical model cannot do significantly better than random guessing, under the hardness of classically computing the discrete cube root.
\end{thm}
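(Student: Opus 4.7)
The plan is to split the proof into a positive half (a shadow model learns $g_s$ well) and a negative half (no fully classical learner can beat random guessing), relying on the flipped model already defined in \cref{eq:cube_root-model} and on the three observations~(i)--(iii) about the discrete cube root.

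For the positive direction, I would first describe the quantum training phase: given access to $N$ (which is part of the learning-task instance and can be read off from any training example $(\x,g_s(\x))$ together with $\x$), run Shor's algorithm to factor $N=pq$ and thereby compute the trap-door key $d\in\Z$ satisfying $\x^d\equiv\sqrt[3]{\x}\pmod N$. Then use a handful of training samples to identify an $s'$ close to $s$ by a simple classical search over candidate shifts, since once $d$ is known the whole function $\widehat{g}_{d,s'}(\x)=[\x^{d}\bmod N\in[s',s'+(N-1)/2]]$ is classically computable and can be tested against labelled data; I would pin down the required sample size by a standard Hoeffding/union-bound argument over the $N$ possible shifts. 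The shadowing phase is then trivial: $\rho(\thv)=\ket{d,s'}\!\bra{d,s'}$ is a computational basis state, so a single computational-basis measurement recovers the bit-string advice $\omega(\thv)=(d,s')$ with probability one. Finally, the evaluation phase is purely classical: given $\x$, compute $\x^d\bmod N$ by fast modular exponentiation and output $\widehat{g}_{d,s'}(\x)$. Combining the analyses of steps shows that the resulting shadow model agrees with $g_s$ on an arbitrarily large fraction of $\Z$, with all three phases running in time $\mathrm{poly}(n)$.

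For the negative direction, I would argue by contradiction: assume there is a classical learner $\A$ that, from $\mathrm{poly}(n)$ training examples of the form $(\x,g_s(\x))$ with $\x$ drawn uniformly from $\Z$, outputs a classical hypothesis $h$ that agrees with $g_s$ on a $1/2+1/\mathrm{poly}(n)$ fraction of $\Z$ with non-negligible probability. First, by observation~(ii), the training examples can be generated \emph{classically} and efficiently from $N$ alone: pick $\y\in\Z$ uniformly, set $\x\leftarrow \y^3\bmod N$, and output $(\x,\mathbf{1}[\y\in[s,s+\tfrac{N-1}{2}]])$; this exactly matches the distribution of training examples the learner expects. Hence the whole pipeline ``sample training data from $N$, run $\A$, evaluate $h$ on a fresh $\x$'' is a classical polynomial-time algorithm that predicts the bit $g_s(\x)$ of the discrete cube root of $\x$ with advantage $1/\mathrm{poly}(n)$. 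Invoking the Alexi--Chor--Goldreich--Schnorr hardcore-bit theorem~\cite{alexi88} for the RSA/discrete-cube-root function as quoted in the excerpt, such a predictor can be bootstrapped into a classical algorithm that inverts $\y\mapsto\y^3\bmod N$ with high probability over $\x$ and $N$, contradicting the DCR assumption.

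Putting the two halves together yields the theorem: shadow models achieve arbitrarily good accuracy while any fully classical learner is limited to $1/2+\mathrm{negl}(n)$ accuracy under DCR. I expect the main obstacle to be the negative direction, specifically verifying that the reduction of~\cite{alexi88} from predicting a \emph{single bit} of the cube root applies cleanly to our notion of ``shifted half-interval'' bits $g_s$ and to the learning-theoretic (rather than oracle) setting; this requires checking that the distribution over training examples induced by the classical sampling procedure above matches the one assumed in the hardcore predicate theorem, and that the advantage $1/\mathrm{poly}(n)$ survives the reduction. The positive direction is essentially bookkeeping once Shor's algorithm and the trap-door structure are in place, and the argument that a single basis measurement suffices as the ``classical shadow'' is immediate from the fact that $\rho(\thv)$ is a computational basis state.
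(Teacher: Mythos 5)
Your proposal matches the paper's own proof in all essentials: the same flipped model with the trap-door key $d$ obtained via Shor's algorithm and Euclid, the same trivial shadowing by a single computational-basis measurement of $\ket{d,s'}$, and the same hardness argument that classically generatable training data (sample $\y$ uniformly, output $(\y^3 \bmod N,\, \mathbf{1}[\y\in[s,s+\frac{N-1}{2}]])$) plus a learner beating $1/2+1/\mathrm{poly}(n)$ would yield the $1/\mathrm{poly}(n)$-oracle needed by the Alexi et al.\ reduction, contradicting DCR. The only cosmetic difference is in locating $s'$: you invoke Hoeffding plus a union bound over candidate shifts (which should be restricted to the $|X|$ cube roots of the training points to stay polynomial-time), whereas the paper argues directly that with a training set of size $\O(\log(1/\delta)/\eps)$ some training point's cube root lands within $\eps N$ of $s$, and takes the empirical-risk minimizer among those candidates.
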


In Appendix \ref{appdx:quantum-advantage} we formalize the statement of this result using the PAC framework and provide more details on the setting and the proofs.

\section{General shadow models}\label{sec:general-shadow}

As mentioned at the start of this paper, shadow models are not limited to shadowfied flipped models, and the main alternative proposals are based on the Fourier representation of quantum models \cite{schreiber22,landman22}. It is clear that Fourier models are defined very differently from flipped models, but one may wonder whether they nonetheless include shadowfied flipped models as a special case, or the other way around.

In this section, we first start by showing that there exist quantum models that admit shadow models (i.e., are shadowfiable) but cannot be shadowfied efficiently using a Fourier approach. This then motivates our proposal for a general definition of shadow models, and we show that, under this definition, all shadow models can be expressed as shadowfied flipped models. Finally, we show the existence of quantum models that are not shadowfiable at all under likely complexity theory assumptions.

\subsection{Shadow models beyond Fourier}\label{sec:beyond-Fourier}

An interesting approach to construct shadows of quantum models is based on their natural Fourier representation. It has been shown \cite{schuld21b,casas23} that quantum models can be expressed as generalized Fourier series of the form
\begin{equation}
    f_{\thv}(\x) = \sum_{\bm{\omega} \in \Omega} c_{\bm{\omega}}(\thv) e^{-i\bm{\omega}\cdot\x}
\end{equation}
where the accessible frequencies $\Omega$ only depend on properties of the encoding gates used by the model (notably the number of encoding gates and their eigenvalues). Since these frequencies can easily be read out from the circuit, one can proceed to form a shadow model by estimating their associated coefficients $c_{\bm{\omega}}(\thv)$ using queries of the quantum model $f_{\thv}(\x)$ at different values $\x$ and, e.g., a Fourier transform \cite{schreiber22}. Given a good approximation of these coefficients, one can then compute estimates of $f_{\thv}(\x)$ for arbitrary new inputs $\x$. We will refer to such a shadowing approach that considers the quantum model as a black-box, aside from the knowledge of its Fourier spectrum, as the Fourier shadowing approach.

Although we will be explicit about this in the next subsection, we will consider a shadowing procedure to be successful, if, with high probability, the resulting shadow model agrees with the original model on all inputs\footnote{We want the shadowing procedure to be successful independently of the data distribution under which the model should be trained, which justifies this definition. We discuss this point further in Appendix \ref{appdx:shadowfiability}.\label{note:shadowfiability}}, i.e.,
\begin{equation}
	\max_{\x\in\X}\big|f_{\thv}(\x) - \widetilde{f}_{\thv}(\x)\big| \leq \eps,
\end{equation}
for a specified $\eps\geq0$.

We show that the Fourier shadowing approach can suffer from an exponential sample complexity in the dimension of the input data $\x$, making it intractable for high-dimensional input spaces. To see this, consider the linear model:
\begin{equation}\label{eq:grover-model}
\begin{gathered}
f_{\y}(\x)  = \Tr[\rho(\x)O(\y)]\\
\rho(\x) = \bigotimes_{i=1}^n R_Y(x_i)\ket{0}\!\bra{0}R_Y^\dagger(x_i)\ \&\ O(\y) = \ket{\y}\!\bra{\y}.
\end{gathered}
\end{equation}
for $\x\in\R^n$ and $\y \in \{0,1\}^{\otimes n}$. Let us first restrict our attention to the domain $\x \in \{0,\pi\}^n$. It is quite clear that on this domain, $f_{\y}(\x) = \delta_{\x/\pi,\y}$ plays the role of a database search oracle, where the database has $2^n$ elements and a unique marked element $\y$. From lower bounds on database search, we know that $\Omega(2^n)$ calls to this oracle are needed to find $\y$ \cite{grover96}. This implies that a Fourier shadowing approach would require $\Omega(2^n)$ calls to $f_{\y}(\x) = \delta_{\x/\pi,\y}$ in order to guarantee $\max_{\x\in\X} \big|\widetilde{f}_{\thv}(\x)-f_{\thv}(\x)\big| \leq 1/4$. In Appendix \ref{appdx:beyond-Fourier}, we explain how this result can be generalized to the full domain $\x\in\R^n$, and we relate this bound on the sample complexity to the Fourier decomposition of the model.

On the other hand, note that the flipped model associated to $f_{\y}(\x)$ allows for a straightforward shadowing procedure. Indeed, by preparing $O(\y)$ and measuring it in the computational basis, one straightforwardly obtains $\y$ and can therefore classically compute the expectation value of any tensor product observable $\rho(\x)$ as specified by \cref{eq:grover-model}. Therefore, we have shown that there exist shadowfiable models that are not efficiently Fourier-shadowfiable, i.e., for which a shadowing procedure based solely on the knowledge of their Fourier spectrum and on black-box queries has query complexity that is exponential in the input dimension.

\subsection{All shadow models are shadows of flipped models} 
We give a general definition of shadow models that can encompass all methods that have been proposed to generate them. In contrast to the definition of classical surrogates proposed by Schreiber \emph{et al.}~\cite{schreiber22}, we give explicit definitions for the shadowing and evaluation phases of shadow models which makes explicit the need for a quantum computer in the shadowing phase. Indeed, as mentioned in the introduction, the term \emph{classical surrogate}\break has been used to describe both a classically evaluatable model obtained from a quantum shadowing procedure and a fully classical model trained directly on the data. We want to avoid this confusion in the definition of shadow models. We view a general shadowing phase as the generation of \emph{advice} that can be used to classically evaluate a quantum model. This advice is generated by the execution of quantum circuits that may or may not depend on the (trained) quantum circuit from the training phase. For instance, when we shadowfy a flipped model, we simply prepare the parametrized states $\rho(\thv)$ and use (randomized) measurements to generate an operationally meaningful classical description. In the case of Fourier shadowing, this advice is instead generated by evaluations of the quantum model $f_{\thv}(\x)$ for different inputs $\x\in\R^d$ that are rich enough to learn the Fourier coefficients of this model. We propose the following definition:

\begin{dfn}[General shadow model]\label{def:shadow-model}
Let $W_1(\thv),\allowbreak ...,\allowbreak W_M(\thv)$ be a sequence of $\O(\text{poly}(m))$-time quantum circuits applied on all-zero states $\ket{0}^{\otimes m}$, and that can potentially be chosen adaptively. Call $\omega(\thv) = (\omega_1(\thv), \ldots, \omega_M(\thv))$ the outcomes of measuring the output states of these circuits in the computational basis. A general shadow model is defined as:
\begin{equation}
f_{\thv}(\x) = \A(\x,\omega(\thv))
\end{equation}
where $\A$ is a classical $\O(\text{poly}(M,m,d))$-time algorithm that processes the outcomes $\omega(\thv)$ along with an input $\x\in\R^d$ to return the (real-valued) label $f_{\thv}(\x)$.
\end{dfn}

From this definition, a shadow model is a classically evaluatable model that uses quantum-generated advice. Crucially, this advice must be independent of the data points $\x$ we wish to evaluate the model on in the future. We distinguish the notion of a shadow model from that of a \emph{shadowfiable} quantum model, that is a quantum model that admits a shadow model:

\begin{dfn}[Shadowfiable model]\label{def:shadowfiable}
A model $f_{\thv}$ acting on $n$ qubits is said to be shadowfiable if, for $\eps,\delta > 0$, there exists a shadow model $\widetilde{f}_{\thv}$ such that, with probability $1-\delta$ over the quantum generation of the advice $\omega(\thv)$ (i.e., the shadowing phase), the shadow model satisfies\footref{note:shadowfiability}
\begin{equation}
	\max_{\x\in\X}\abs{f_{\thv}(\x) - \widetilde{f}_{\thv}(\x)} \leq \eps,
\end{equation}
and uses $m,M \in \O(\text{poly}(n,1/\eps,1/\delta))$ qubits and circuits to generate its advice $\omega(\thv)$.
\end{dfn}

While we have seen that there exist shadowfiable models that cannot be shadowfied efficiently using a Fourier approach, we show that all shadowfiable models as defined above can be approximated by shadowfiable flipped models.

\begin{lem}[Flipped models are shadow-universal]\label{lem:all-shadowfiable-flipped}
All shadowfiable models as defined in Defs.~\ref{def:shadow-model} and \ref{def:shadowfiable} can be approximated by flipped models $f_{\thv}(\x)~=~\Tr[\rho(\thv)O(\x)]$ with the guarantee that computational basis measurements of $\rho(\thv)$ and efficient classical post-processing can be used to evaluate $f_{\thv}(\x)$ to good precision with high probability.
\end{lem}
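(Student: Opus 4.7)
The plan is to show that every shadow model fits the flipped-model template by packaging the quantum-generated advice into a single parametrized state $\rho(\thv)$ and delegating the $\x$-dependent processing to a computational-basis-diagonal observable $O(\x)$. The key point is that \cref{def:shadow-model} only allows the quantum side of a shadow model to perform computational-basis measurements of polynomial-size circuits, which is precisely the kind of object one can absorb into a single state preparation.

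First, I would fold the $M$ possibly adaptive circuits $W_1(\thv),\ldots,W_M(\thv)$ into a single polynomial-size unitary $U(\thv)$ via the principle of deferred measurement: every mid-procedure computational-basis outcome is copied into a fresh ancilla via CNOT, and the classical control structure that selects $W_{i+1}(\thv)$ from previous outcomes is implemented coherently using these ancillas as quantum controls. Setting
\begin{equation}
\rho(\thv) = U(\thv)\ket{0}\!\bra{0}^{\otimes K}U^\dagger(\thv)
\end{equation}
for some $K \in \O(\text{poly}(m,M))$, a computational-basis measurement of $\rho(\thv)$ yields a bit-string $z$ whose marginal on the ancilla registers is distributed identically to the advice $\omega(\thv)$.

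Next, I would define the data-dependent observable diagonally by
\begin{equation}
O(\x) = \sum_z \A(\x, z)\,\ket{z}\!\bra{z},
\end{equation}
where $\A$ is the classical post-processing algorithm from \cref{def:shadow-model}; since $\A$ is real-valued, $O(\x)$ is Hermitian. By construction,
\begin{equation}
\Tr[\rho(\thv)O(\x)] = \E_{z\sim\rho(\thv)}[\A(\x, z)].
\end{equation}
By \cref{def:shadowfiable}, with probability at least $1-\delta$ over $\omega$, $\max_{\x}\abs{\A(\x,\omega)-f_{\thv}(\x)}\leq\eps$. Assuming $\abs{\A}\leq B$ (which can be enforced by truncation since labels of interest are bounded), this yields $\abs{\Tr[\rho(\thv)O(\x)] - f_{\thv}(\x)} \leq \eps + 2B\delta$ uniformly in $\x$. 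The evaluation procedure is then to draw a single sample $z$ by measuring $\rho(\thv)$ in the computational basis and classically compute $\A(\x,z)$; this costs $\O(\text{poly}(M,m,d))$ time per query and, with probability $1-\delta$ over $z$, simultaneously outputs a value within $\eps$ of $f_{\thv}(\x)$ for every $\x\in\X$.

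The main obstacle to watch is the coherent embedding of the adaptive stage: one has to verify that controls depending on outcomes of $W_1,\ldots,W_{i-1}$ can be implemented with polynomially many gates, which relies on the fact that the classical choice of $W_{i+1}(\thv)$ from previous outcomes is itself polynomial-time, as is implicit in \cref{def:shadow-model}. A secondary technical point is the boundedness of $\A$, needed to keep $O(\x)$ a well-defined observable with controlled spectral norm; truncating $\A$ within the natural output range of the learning task handles this with negligible error.
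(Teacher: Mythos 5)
Your proposal is correct and follows essentially the same route as the paper's proof: coherently implementing the adaptive circuits $W_1(\thv),\ldots,W_M(\thv)$ (deferred measurement / coherent controls) to obtain a state $\rho(\thv)$ encoding the advice distribution, observing that the flipped model $\Tr[\rho(\thv)O(\x)]$ is the expectation of $\A(\x,\cdot)$ over the advice randomness (with the same $\eps + \O(B\delta)$-type error bound, which the paper likewise controls by capping $\|\widetilde{f}_{\thv}\|$ and tuning $\delta'$), and noting that a single computational-basis measurement of $\rho(\thv)$ yields an advice string giving uniform $\eps$-accuracy with probability $1-\delta$. The only cosmetic difference is that the paper absorbs $\A$ into the circuit as a reversible unitary $U_{\A}$, whereas you keep it as a diagonal data-dependent observable $O(\x)=\sum_z \A(\x,z)\ket{z}\!\bra{z}$; these are operationally equivalent here.
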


This result is essentially based on the observation that the evaluation of a general shadow model as defined in Def.~\ref{def:shadow-model} can be done entirely coherently. Instead of classically running the algorithm $\A$ using the random advice $\omega(\thv)$, one can quantumly simulate this algorithm (using a reversible execution) and execute it on the coherent advice $\rho(\thv) = \ket{\omega(\thv)}\!\bra{\omega(\thv)}$ generated by $\{W_1(\thv), ..., W_M(\thv)\}$ before the computational basis measurements. We refer to Appendix \ref{appdx:flipped-universal} for a more detailed statement and proof.

\subsection{Not all quantum models are shadowfiable}

\begin{figure}[t]
	\centering
	\includegraphics[width=\linewidth]{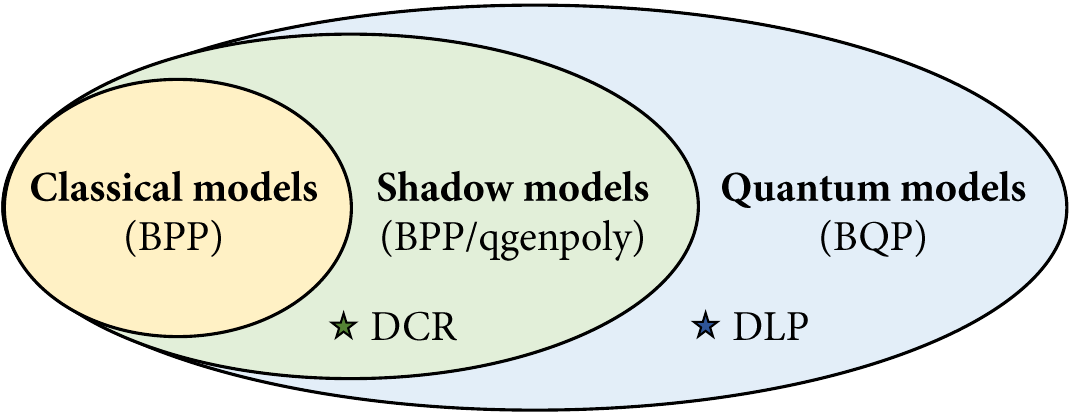}
	\caption{\textbf{Separations between classical, shadow, and quantum models.} Under the assumption that the discrete cube root (DCR) cannot be computed classically in polynomial time, we have a separation between shadow models (captured by the class \textsf{BPP/qgenpoly}) and classical models (in \textsf{BPP}). Under the assumption that there exist functions that can be computed in quantum polynomial time but not in classical polynomial time with the help of advice (i.e., \textsf{BQP}~$\not\subset$~\textsf{P/poly}), we have a separation between quantum models (universal for \textsf{BQP}) and shadow models (\textsf{BPP/qgenpoly}). A candidate function for this separation is the discrete logarithm (DLP).}
	\label{fig:models-venn}
\end{figure}

From the discrete cube root learning task, we already understand that a learning separation can be established between classical and shadowfiable models. We would also like to understand whether a learning separation exists between shadowfiable models and general quantum models, or equivalently, whether all quantum models are shadowfiable. We show that this also is not the case, under widely believed assumptions (see Fig.~\ref{fig:models-venn}).

\begin{thm}[Not all shadowfiable]\label{thm:not-all-shadowfiable}
    Under the assumption that \textsf{BQP} $\not\subset$ \textsf{P/poly}, there exist quantum models, i.e., models in \textsf{BQP}, that are not shadowfiable, i.e., that are not in \textsf{BPP/qgenpoly}.
\end{thm}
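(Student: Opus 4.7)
The plan is to show that every shadowfiable model places the function it computes inside \textsf{P/poly}, so that the hypothesis $\textsf{BQP}\not\subset\textsf{P/poly}$ immediately exhibits a \textsf{BQP}-realizable model that cannot be shadowfiable. The argument reduces to establishing the chain $\textsf{BPP/qgenpoly}\subseteq\textsf{BPP/poly}=\textsf{P/poly}$ and then invoking the \textsf{BQP}-universality of quantum (linear) models.

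First, I would unpack Definition~\ref{def:shadowfiable}. The success criterion is uniform over inputs: with probability at least $1-\delta$ over the quantum generation of the advice $\omega(\thv)$, the shadow satisfies $\max_{\x\in\X}\abs{f_{\thv}(\x)-\widetilde{f}_{\thv}(\x)}\leq\eps$ for \emph{all} $\x$ simultaneously. In particular, for every input length $n$ the support of the advice distribution contains at least one fixed classical string $\omega^{\star}$ of length $\O(\text{poly}(n,1/\eps,1/\delta))$ on which the classical algorithm $\A(\cdot,\omega^{\star})$ is $\eps$-close to $f_{\thv}$ on every input. Hardwiring $\omega^{\star}$ as classical advice demotes the shadow evaluation to a \textsf{BPP/poly} computation; Adleman's theorem then absorbs the internal randomness of $\A$ into the advice, giving $\textsf{BPP/qgenpoly}\subseteq\textsf{P/poly}$.

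Second, I would use the fact that quantum linear models are universal for \textsf{BQP}. For any $L\in\textsf{BQP}$ decided by a polynomial-size circuit $U_n$ with bounded error, the model $f(\x)=\Tr[\rho(\x)O]$ with $\rho(\x)=U_n\left(\ket{\x}\!\bra{\x}\otimes\ket{0}\!\bra{0}^{\otimes a}\right)U_n^{\dagger}$ and $O$ the projector onto the designated output qubit computes the acceptance probability, which exceeds $2/3$ on yes-instances and lies below $1/3$ on no-instances. Choosing any $L\in\textsf{BQP}\setminus\textsf{P/poly}$ (which exists by the theorem's hypothesis) and fixing, say, $\eps<1/6$ and $\delta<1/3$, a shadow model for $f$ would — via the previous step — place $L$ inside \textsf{P/poly}, contradicting the choice of $L$. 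Hence $f$ is a quantum model that admits no classical shadow.

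The main point requiring care, rather than a genuine obstacle, is the uniformity of the success guarantee: the advice string must work simultaneously for all inputs of a given length, not merely in expectation per input. This is exactly what the $\max_{\x}$ clause in Definition~\ref{def:shadowfiable} delivers, and it is what licenses the nonconstructive fixing of a single advice $\omega^{\star}$. The remaining ingredients — the $\O(\text{poly})$ bounds on $m,M$, the Adleman derandomization of $\A$'s internal coins, and passing from the model's real-valued output to a decision via a constant-margin threshold — are all routine, and can be carried out with the freedom in $\eps,\delta$ granted by Definition~\ref{def:shadowfiable}.
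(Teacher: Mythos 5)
Your proposal is correct and follows essentially the same route as the paper: show that a shadow model's quantum-generated advice can be fixed nonuniformly (with Adleman-style derandomization of any remaining classical randomness) so that \textsf{BPP/qgenpoly} $\subseteq$ \textsf{P/poly}, combine this with the \textsf{BQP}-universality of quantum linear models, and derive a contradiction with \textsf{BQP} $\not\subset$ \textsf{P/poly}. The only cosmetic difference is that the paper's appendix proof amplifies a per-input $2/3$ guarantee by majority-voting over $18n$ independent advice/coin draws and a union bound over all $2^n$ inputs, rather than fixing a single uniformly good advice string via the $\max_{\x}$ clause as you do; both are instances of the same nonuniform derandomization.
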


We start by noting that shadow models can be characterized by a complexity class we define as \textsf{BPP/qgenpoly},\footnote{Stands for Bounded-error Probabilistic Polynomial-time with quantumly generated (polynomial-time) advice of polynomial size. In this paper, we talk mostly about complexity classes for decision problems. Note however that for models, since these compute real-valued functions (represented to machine precision), we should instead consider the function-problem version of these complexity classes.} which contains all functions that can be computed efficiently classically with the help of polynomially-sized advice \emph{generated efficiently by a quantum computer}. This class is trivially contained in the standard class \textsf{BPP/poly}, which doesn't have any constraint on how the advice is generated and can be derandomized to \textsf{P/poly} (i.e., \textsf{BPP/poly}$=$\textsf{P/poly} \cite{adleman78}). Note however that \textsf{BPP/qgenpoly} constitutes a physically relevant class, since it only contains problems that can be solved efficiently by classical and quantum computers, as opposed to \textsf{P/poly}, which contains undecidable problems, such as a version of the halting problem. We refer to Appendix \ref{appdx:complexity-classes} for formal definitions of these complexity classes, and an in-depth discussion.

On the other hand, it is easy to show that quantum models (more precisely quantum linear models) can also represent any function in \textsf{BQP}, i.e., all functions that are efficiently computable on a quantum computer. For this, one simply takes a simple encoding of an $n$-bit input $\x$:
\begin{equation}
\rho(\x) = \bigotimes_{i=1}^n X_i^{x_i}\ket{0}\!\bra{0}X_i^{x_i}
\end{equation}
along with an observable
\begin{equation}
O_n = U_n^\dagger Z_1 U_n
\end{equation}
specified by an arbitrary $n$-qubit circuit $U_n$ in \textsf{BQP} and the Pauli-$Z$ operator applied on it first qubit. The resulting model $f_n(\x) = \Tr[\rho(\x)O_n]$ can then be used to decide any language in \textsf{BQP}.

Combining these two observations, we get that the proposition ``all quantum models are shadowfiable'' would imply that \textsf{BQP} $\subseteq$ \textsf{BPP/qgenpoly} $\subseteq$ \textsf{P/poly}, which violates the widely-believed conjecture \cite{aaronson11} that \textsf{BQP} $\not\subseteq$ \textsf{P/poly} (see Appendix \ref{appdx:BQP-P/poly} for a formal proof). To give an example of candidates of non-shadowfiable quantum models, the discrete logarithm $\log_g x \text{ mod } p$ (or even one bit of it) is provably in \textsf{BQP} but is not believed to be in \textsf{P/poly}. Therefore, a model that could be used to compute the discrete logarithm (e.g., the quantum model of Liu \emph{et al.}~\cite{liu20}) is likely not shadowfiable.

\section{Discussion}

In this work, we examined the class of quantumly trainable, classically evaluatable models we refer to as shadow models. Our analysis has shown that these models can be universally captured by a restricted family of quantum linear models, wherein data-encoding and variational operations are flipped compared to conventional quantum models. Furthermore, we demonstrated that shadow models belong to an intriguing complexity class, coined BPP/qgenpoly, exhibiting superiority over classical models (in BPP) but inferiority to fully quantum models (in BQP), based on prevalent complexity theory assumptions.

By presenting shadows models as flipped linear models, we illustrated how shadow tomography protocols could be applied straightforwardly to construct shadow models in practice. Yet, it is important to note a crucial distinction between a shadow tomography scenario and a shadow model: in the latter, one has control over the quantum state intended for shadowing. This distinction introduces new possibilities for devising `state-aware' shadow tomography protocols aimed at constructing shadow models. This could potentially alleviate some of the limitations of current classical shadow protocols. 

Considering our findings on learning separations, we identified a noteworthy characteristic of shadow models: their ability to quantumly compute useful advice for a classical evaluation algorithm, enabling them to tackle otherwise classically-intractable tasks. The example we presented, based on trap-door functions, readily allows for such constructions, but it remains somewhat contrived. Exploring similar constructions for physically-relevant problems, such as predicting ground state properties of complex quantum systems, would be an intriguing avenue for future research.

\vspace{5em}

\section*{Acknowledgments}
The authors would like to acknowledge Johannes Jakob Meyer, Franz Schreiber, and Richard Kueng for insightful discussions at different phases of this project. SJ acknowledges support from the Austrian Science Fund (FWF) through the projects DK-ALM:W1259-N27 and SFB BeyondC F7102. SJ also acknowledges the Austrian Academy of Sciences as a recipient of the DOC Fellowship. SJ thanks the BMWK (EniQma) and the Einstein Foundation (Einstein Research Unit on Quantum Devices) for their support. VD and CG acknowledge the support of the Dutch Research Council (NWO/ OCW), as part of the Quantum Software Consortium programme (project number 024.003.037). VD and RM acknowledge the support of the NWO through the NWO/NWA project Divide and Quantum. VD and SCM acknowledge the support by the project NEASQC funded from the European Union’s Horizon 2020 research and innovation programme (grant agreement No 951821). VD and SCM also acknowledge partial funding by an unrestricted gift from Google Quantum AI. This work was also in part supported by the Dutch National Growth Fund (NGF), as part of the Quantum Delta NL programme.

\section*{Data availability}
Data sharing is not applicable to this paper as no datasets were generated or analyzed during the current study.
 
\bibliography{references}

\clearpage

\appendix

\onecolumngrid

\section{Formal definitions}

\subsection{Linear models}

\begin{definition}[Conventional linear model]\label{def:conventional-model}
Let $U(\x)$ be an encoding quantum circuit that is parametrized by input data $\x\in\R^d$, $\rho_0$ a fixed input quantum state (diagonal in the computational basis), $V(\thv)$ a variational quantum circuit parametrized by a vector $\thv\in\R^p$ and $O = \sum_{i=1}^{m} w_i O_i$ an observable specified by a (trainable) linear combination of Hermitian matrices $\{O_i\}_{i=1}^m$. A conventional linear model is defined by the parametrized function:
\begin{equation}
f_{\thv}(\x)  = \Tr[\rho(\x)O(\thv)]
\end{equation}
for $\rho(\x) = U(\x)\rho_0U^\dagger(\x)$ and $O(\thv) = V^\dagger(\thv)OV(\thv)$ (when the weights $\{w_i\}_{i=1}^{m}$ are also trainable, we include them in the parameters $\thv$ of the model).
\end{definition}

\begin{definition}[Flipped model]\label{def:flipped-model}
Let $V(\thv)$ be a variational quantum circuit parametrized by a vector $\thv\in\R^p$, $\rho_0$ a fixed input quantum state (diagonal in the computational basis), $U(\x)$ an encoding quantum circuit that is parametrized by input data $\x\in\R^d$ and $O_{\x} = \sum_{i=1}^{m} w(\x)_i O_i$ an observable specified by a linear combination of Hermitian matrices $\{O_i\}_{i=1}^m$, weighted by a data-dependent function $w: \R^d \rightarrow \R^m$. A flipped model is defined by the parametrized function:
\begin{equation}
f_{\thv}(\x)  = \Tr[\rho(\thv)O(\x)]
\end{equation}
for $\rho(\thv) = V(\thv)\rho_0V^\dagger(\thv)$ and $O(\x) = U^\dagger(\x)O_{\x}U(\x)$.
\end{definition}

\subsection{Shadow models}\label{appdx:shadow-models}

\begin{definition}[Shadow model]\label{def:shadow-model-formal}
Let $\{W_1(\thv), ..., W_M(\thv)\}$ be a sequence of $m$-qubit unitary circuits that are dependent on a parameter vector $\thv\in\R^p$, and can potentially be chosen adaptively. We define the \emph{quantum-generated} advice $\omega(\thv) = (\omega_1(\thv), \ldots, \omega_M(\thv))$ as the measurement outcomes $\omega_i(\thv)$ obtained by measuring the states $W_i(\thv)\ket{0}^{\otimes m}$ in the computational basis (and a description of their associated circuits $W_i(\thv)$). A shadow model is defined as the parametrized function:
\begin{equation}
f_{\thv}(\x)  = \A(\x,\omega(\thv))
\end{equation}
for $\A$ a classical $\O(\text{poly}(m,M,d))$-time algorithm that takes as input the advice $\omega(\thv)$, a data vector $\x\in\R^d$ and outputs a real-valued label $f_{\thv}(\x)$
\end{definition}

Examples of shadow models:
\begin{itemize}
	\item Take a flipped model $f_{\thv}(\x)  = \Tr[\rho(\thv)O(\x)]$ for $\rho(\thv)$ a quantum state generated by a circuit $V(\thv)$ applied to $\ket{0}^{\otimes n}$ and $O(\x) = \sum_{i=1}^{m} w(\x)_i P_i$ where $\{P_i\}_{i=1}^{m}$ are all $k$-local Pauli strings acting on $n$ qubits.\\
	A simple shadow model associated to this flipped model consists in estimating all the expectation values $\expval{P_i}\approx\Tr[\rho(\thv)P_i]$ via repeated measurements of $\rho(\thv)$ in the eigenbasis of each of the $m = {n\choose k}3^k$ Pauli strings $P_i$, and taking their weighted combination $f_{\thv}(\x) = \sum_{i=1}^{m} w(\x)_i \expval{P_i}$. In this case, the unitary circuits $\{W_1(\thv), ..., W_M(\thv)\}$ are simply obtained by $V(\thv)$ followed by a basis change unitary (corresponding to the Pauli basis of $P_i$). As for the classical algorithm $\A$, this is simply a collection of mean estimators that compute estimates $\expval{P_i}$ out of measurement outcomes, followed by the computation of a weighted sum. The number of measurements needed for this shadow model to guarantee $\abs{\widetilde{f}_{\thv}(\x)-f_{\thv}(\x)} \leq \eps,\ \forall \x\in\R^d$ is $M \in \widetilde{\O}\left(\frac{m\max_{\x}\norm{w(\x)}_1^2}{\eps^2}\right)$. Indeed, estimating each $\expval{P_i}$ to additive error $\frac{\eps}{\max_{\x}\norm{w(\x)}_1}$ allows us to guarantee the desired total additive error, and each of this estimates can be obtained using $\widetilde{\O}\left(\frac{\max_{\x}\norm{w(\x)}_1^2}{\eps^2}\right)$ samples.\\
	A more interesting shadow model relies on Pauli classical shadows \cite{huang20} where random Pauli measurements are used to construct $\omega(\thv)$. Median-of-mean estimators then use these measurement outcomes to compute empirical estimates $\hat{\rho}(\thv)$ of $\rho(\thv)$ and approximate all expectation values $\Tr[\rho(\thv)P_i]$. The advantage of this shadow model is that it requires $M \in \widetilde{\O}\left(\frac{3^k\max_{\x}\norm{w(\x)}_1^2}{\eps^2}\right)$ measurements for the same guarantees as the shadow model above, which constitutes savings of a factor $\widetilde{\O}\left({n\choose k}\right)$.
	\item Another interesting flipped model that can be turned into a shadow model: $f_{\thv}(\x)  = \Tr[\rho(\thv)O(\x)]$ for $\rho(\thv)$ arbitrarily defined and $O(\x) = \ket{\psi(\x)}\!\bra{\psi(\x)}$, for some pure states $\ket{\psi(\x)}$. Given that, $\Tr[O(\x)^2] = 1,\ \forall \x \in \R^d$, then Clifford classical shadows \cite{huang20} allow to construct a representation $\omega(\thv)$ of $\rho(\thv)$ that guarantees $\abs{\widetilde{f}_{\thv}(\x)-f_{\thv}(\x)} \leq \eps,\ \forall \x\in\R^d$ using only $M \in \widetilde{\O}\left(\frac{1}{\eps^2}\right)$ measurements. However, for this estimation to be computationally efficient, the states $\ket{\psi(\x)}$ need to be stabilizer states, or be generated by few (i.e., $\O(\log(n))$) non-Clifford gates \cite{leone23}.
	\item Consider the model $f_{\thv}(\x)  = \sum_{i=1}^{m} w(\x)_i \Tr[\rho(\thv)P_i]^2$ for $m = 4^n$, i.e., $\{P_i\}_{i=1}^{m}$ are all $n$-qubit Pauli strings. By preparing two-copy states $\rho(\thv)\otimes\rho(\thv)$ and performing simultaneous Bell measurements between pairs of qubits of these two copies, $M = \O\left(\frac{1}{\eps^2}\right)$ such measurements give a $\omega(\thv)$ rich enough to compute any $\Tr[\rho(\thv)P_i]^2$ to precision $\eps$ \cite{huang21}. Therefore, evaluating $f_{\thv}(\x)$ requires only $M \in \widetilde{\O}\left(\frac{\max_{\x}\norm{w(\x)}_1^2}{\eps^2}\right)$ measurements using this shadow model, compared to $2^{\Omega(n)}$ for a shadow model that would construct $\omega(\thv)$ using single-copy measurements only (assuming that for all $i \in \{1, \ldots, m\}$, there exists an $\x \in \R^d$ such that $w(\x)_i \neq 0$). For a $w(\x)$ that is $k$-sparse for all $\x$, with $k \ll m$, this constitutes an exponential separation.
\end{itemize}

\subsection{Complexity classes\label{appdx:complexity-classes}}

\begin{definition}[\textsf{BQP}]\label{def:BQP}
    A language $L$ is in \textsf{BQP} if and only if there exists a polynomial-time uniform family of quantum circuits $\{U_n: n \in \mathbb{N}\}$, such that
\begin{enumerate}
    \item For all $n \in \mathbb{N}$, $U_n$ takes as input an $n$-qubit computational basis state, and outputs one bit obtained by measuring the first qubit in the computational basis.
    \item For all $x \in L$, the probability that the output of $U_{|x|}$ applied on the input $x$ is $1$ is greater or equal to $2/3$.
    \item For all $x \notin L$, the probability that the output of $U_{|x|}$ applied on the input $x$ is $0$ is greater or equal to $2/3$.
\end{enumerate}
\end{definition}

\begin{definition}[\textsf{P/poly}]\label{def:P/poly}
    A language $L$ is in \textsf{P/poly} if and only if there exists a polynomial-time classical algorithm $\A$ and a sequence of polynomial-size advice strings $\{\alpha_n\in\{0,1\}^{\text{poly}(n)}\}_{n \in \mathbb{N}}$ such that for all $n \in \mathbb{N}$ and all $x\in\{0,1\}^n$:
    \begin{equation}
    \A(x,\alpha_n) = 1 \iff x \in L.
    \end{equation}
\end{definition}

\begin{definition}[\textsf{BPP/qgenpoly}]\label{def:BPP/qgenpoly}
    A language $L$ is in \textsf{BPP/qgenpoly} if and only if there exists a polynomial-time uniform family of quantum circuits $\{U_n: n \in \mathbb{N}\}$ and a polynomial-time probabilistic classical algorithm $\A$, such that
\begin{enumerate}
    \item For all $n \in \mathbb{N}$, $U_n$ takes as input the computational basis state $\ket{0}^{\otimes m}$ for $m\in\O(\text{poly}(n))$, and outputs $m$ bits obtained by measuring all qubits in the computational basis. This constitutes the \emph{quantum-generated} advice $\omega_n$.
    \item For all $x \in \{0,1\}^*$, $\A$ takes as input $x$ and $\omega_{|x|}$.
    \item For all $x \in L$, the probability that $\A(x,\omega_{|x|})$ outputs $1$ is greater or equal to $2/3$, taken over the randomness of $\omega_{|x|}$ and the internal randomness of $\A$.
    \item For all $x \notin L$, the probability that $\A(x,\omega_{|x|})$ outputs $0$ is greater or equal to $2/3$, taken over the randomness of $\omega_{|x|}$ and the internal randomness of $\A$.
\end{enumerate}
\end{definition}

The following inclusions are easy to show:
\begin{equation}
\textsf{BPP} \stackrel{(i)}{\subseteq} \textsf{BPP/qgenpoly} \stackrel{(ii)}{\subseteq} \textsf{BQP}.
\end{equation}
(\emph{i}) follows from making the quantum-generated advice $\omega_n$ empty in the definition of \textsf{BPP/qgenpoly}. (\emph{ii}) follows from the ability to efficiently simulate classical computations on a quantum computer. As illustrated in \cref{fig:shadow-flipped}, the algorithm $\A$ in the definition of \textsf{BPP/qgenpoly} can be simulated unitarily, and absorbed in the uniform family of quantum circuits $\{U_n: n \in \mathbb{N}\}$, resulting in polynomial-time quantum circuits that fit the definition of \textsf{BQP}.

Our learning separations results, i.e., Theorem \ref{thm:quantum-advantage} (Lemmas \ref{lem:harndess-DCR} and \ref{lem:learnability-DCR} in the Appendix) and Theorem \ref{thm:not-all-shadowfiable} (Lemma \ref{thm:not-all-shadowfiable-extended} in the Appendix), can be seen as evidence that the inclusions (\emph{i}) and (\emph{ii}) are strict, based on complexity-theory assumptions. Other notable inclusions that are useful to prove our results are:
\begin{equation}
\textsf{BPP/qgenpoly} \subsetneq \textsf{BPP/poly} = \textsf{P/poly}.
\end{equation}
The equality $\textsf{BPP/poly} = \textsf{P/poly}$ follows from the derandomization results of Adleman \cite{adleman78}, which show that the random errors made by an algorithm in \textsf{BPP/poly} can be canceled by an appropriate choice of the random bits used by the randomized algorithm, which are then appended to the poly-sized advice to obtain an algorithm in \textsf{P/poly}. The inclusion $\textsf{BPP/qgenpoly} \subseteq \textsf{BPP/poly}$ simply comes from the fact that \textsf{BPP/poly} is not restricted in \emph{how} the poly-sized advice is generated, and the remainder of its definition is identical to that of \textsf{BPP/qgenpoly}. The restriction we make on \emph{how} the advice is generated in \textsf{BPP/qgenpoly} makes it a physically relevant complexity class, as opposed to \textsf{(BP)P/poly}. All problems in \textsf{BPP/qgenpoly} can be solved efficiently (i.e., in polynomial time) using (classical and) quantum computers, while \textsf{P/poly} notably contains undecidable problems. Consider for instance the unary version of the halting problem: $\textrm{UHALT} = \{1^n, \textrm{where }n \textrm{ encodes }(M,x)\textrm{ such that the Turing machine }M\textrm{ halts on }x\}$ is an undecidable language (as any algorithm that would decide it would also be able to decide the traditional halting problem), but by considering the advice $\alpha_n = 1$ if $1^n \in \textrm{UHALT}$ and $0$ otherwise (uniquely defined for each input size $n$), one trivially obtains an algorithm in \textsf{P/poly} that solves it. The fact that this advice cannot be generated by a uniform family of (poly-time) circuits is irrelevant for the class \textsf{P/poly}, which is at the source of this result. However it is relevant for the class \textsf{BPP/qgenpoly}, and, in fact, having $\textrm{UHALT} \in \textsf{BPP/qgenpoly}$ would mean that one could solve an undecidable problem using uniform (poly-time) circuits. The impossibility of this hypothetical result gives $\textsf{BPP/qgenpoly} \subsetneq \textsf{P/poly}$.

\section{Properties of flipped models}

\subsection{Sample complexity of evaluating quantum models}\label{appdx:sample-complexity}

Consider a linear quantum model (either conventional or flipped) of the form 
\begin{equation}
    f_{\y}(\x) = \Tr[\rho(\x)O(\y)],
\end{equation}
for a quantum state $\rho(\x)$ parametrized by a vector $\x\in\R^d$ and $O(\y)$ a Hermitian observable parametrized by a vector $\y\in\R^p$. Assume that we can prepare single copies of $\rho(\x)$ and that we can measure them in the eigenbasis of $O(\y)$. We ask: given error parameters $\eps,\delta >0$, how many such measurements of $\rho(\x)$ do we need in order to compute an estimate $\widehat{f}_{\y}(\x)$ of $f_{\y}(\x)$ such that $\abs{\widehat{f}_{\y}(\x)-f_{\y}(\x)}\leq \eps$ with success probability at least $1-\delta$.

It is easy to see that this problem corresponds to a simple Monte Carlo mean estimation. Indeed, we can write a decomposition of $O(\y)$ in its eigenbasis as:
\begin{equation}
    O(\y) = \sum_{i} \lambda_i(\y) \ket{\phi_i}\!\bra{\phi_i}
\end{equation}
where $\lambda_i(\y)$ is a real eigenvalue (since $O(\y)$ is Hermitian) associated to the eigenstate $\ket{\phi_i}\!\bra{\phi_i}$ (these eigenstates can in general also depend on $\y$, but we do not write this dependence explicitly for ease of notation). We can also write a decomposition of $\rho(\x)$ in this same basis as :
\begin{equation}
    \rho(\x) = \sum_{i,j} \rho_{i,j}(\x) \ket{\phi_i}\!\bra{\phi_j}
\end{equation}
such that $\Tr[\rho(\x)] = \sum_{i}\rho_{i,i}(\x) = 1$ by the unit-trace property of $\rho(\x)$, and $\bra{\phi_i}\rho(\x)\ket{\phi_i} = \rho_{i,i}(\x) \geq 0$ from its positive semi-definiteness. From these two properties, we deduce that $\{\rho_{i,i}(\x)\}_{i}$ defines a probability distribution over the eigenstates $\{\ket{\phi_i}\!\bra{\phi_i}\}_{i}$. Therefore, we can see that:
\begin{align}
    \Tr[\rho(\x)O(\y)] &= \Tr[\sum_{i,j,k} \rho_{i,j}(\x)\lambda_k(\y) \ket{\phi_i}\!\bra{\phi_j}\ket{\phi_k}\!\bra{\phi_k}]\\
    &= \Tr[\sum_{i,j} \rho_{i,j}(\x)\lambda_j(\y)\ket{\phi_i}\!\bra{\phi_j}]\\
    &= \sum_{i} \rho_{i,i}(\x)\lambda_i(\y)
\end{align}
simply corresponds to the expectation value of the random variable $i \mapsto \lambda_i(\y)$ under the probability distribution $\{\rho_{i,i}(\x)\}_{i}$, i.e., the probability distribution obtained by measuring $\rho(\x)$ in the eigenbasis of $O(\y)$.

Therefore, we can use known results from (classical) Monte Carlo estimation to bound the sample complexity of evaluating this mean value, and therefore the quantum model. With the assumption that $\rho(\x)$ is given as a black-box and that it can generate arbitrary quantum states (and therefore arbitrary distributions $\{\rho_{i,i}(\x)\}_{i}$), the only property of the random variable $i \mapsto \lambda_i(\y)$ we can use to bound the sample complexity is its bounded domain. Indeed, without additional assumptions of the distribution, we have a tight sample complexity bound of \begin{equation}
    \Theta\left(\frac{B^2\log(\delta^{-1})}{\eps^2}\right)
\end{equation}
samples in order to estimate the mean of a random variable taking values in $[-B,B]$, to precision $\eps$ and with probability of success $1-\delta$ \cite{dagum00,canetti95}. In the case of a quantum model, the random $i \mapsto \lambda_i(\y)$ takes values in $[-\norm{O(\y)}_\infty,\norm{O(\y)}_\infty]$, where $\norm{O(\y)}_\infty$ is the spectral norm of the observable $O(\y)$. Therefore, the sample complexity of estimating a quantum model $f_{\y}(\x) = \Tr[\rho(\x)O(\y)]$ is in
\begin{equation}
    \Theta\left(\frac{\norm{O(\y)}_\infty^2\log(\delta^{-1})}{\eps^2}\right),
\end{equation}
in the absence of any constraint (or information) on the quantum states $\rho(\x)$. 

\subsection{Generalization performance}

In this section, we study the generalization performance of flipped models. Our result can be stated informally in the following lemma:

\begin{lemma}[Generalization bounds (informal)]\label{lem:gen-bounds}
Consider a flipped model $f_{\thv}$ that acts on $n$ qubits and has a bounded observable norm $\norm{O}_\infty$. If this model achieves a small training error $\abs{f_{\thv}(\x) -f(\x)}\leq\eta$ for all $\x$ in a dataset of size $M$, then it also has a small expected error $\abs{f_{\thv}(\x) -f(\x)}\leq2\eta$ with probability $1-\eps$ over the data distribution, provided that the size of the dataset scales as $M \geq \widetilde{\Omega}\left(\frac{n\norm{O}_\infty^2}{\eps\eta^2}\right)$.
\end{lemma}

To prove this result, we take an approach very similar to that of Aaronson \cite{aaronson07}, where we lower bound the number of qubits $n$ and spectral norm $\norm{O}_\infty$ needed by a flipped model to encode arbitrary $k$-bit strings, in a way that can be recovered efficiently via repeated measurements. These bounds naturally allow us to upper bound the fat-shattering dimension of flipped models, a complexity measure that is widely used in generalization bounds \cite{anthony95}.

\subsubsection{Encoding bit-strings in flipped models}

\begin{theorem}\label{thm:lower-bound-shattering}
Let $k$ and $n$ be positive integers with $k>n$. For all $k$-bit strings $\y = y_1 \ldots y_k$, let $\rho(\y)$ be an $n$-qubit mixed state that ``encodes'' $\y$, meaning each bitstring $\y$ is associated to an arbitrary $n$-qubit quantum state $\rho(\y)$. Suppose there exist Hermitian observables $O_1, \ldots, O_k$ with spectral norms $\norm{O_i}_\infty$ such that we call $\norm{O}_\infty = \max_i \norm{O_i}_\infty$, as well as real numbers $\alpha_1, \ldots, \alpha_k$, such that, for all $\y\in\{0,1\}^k$ and $i \in \{1,\ldots,k\}$,\\[-0.5em]

\hspace{1em} (i) if $y_i=0$, then $\Tr[\rho(\y)O_i] \leq \alpha_i - \gamma$, and\\[-0.5em]

\hspace{1em} (ii) if $y_i=1$, then $\Tr[\rho(\y)O_i] \geq \alpha_i + \gamma$.\\[-0.5em]

Then $n\norm{O}_\infty^2/\gamma^2 \in \Omega(k)$.
\end{theorem}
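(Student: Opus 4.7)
The plan is to reduce the encoding described in the theorem to a quantum random access code (QRAC) and then invoke Nayak's lower bound. The core observation is that although each observable $O_i$ is not a binary measurement, the margin $\gamma$ together with the spectral bound $\norm{O}_\infty$ lets us amplify a single flipped-model measurement into a high-confidence bit recovery by repetition.

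First, I would construct an explicit decoding procedure. Given $T$ copies of $\rho(\y)$, to recover $y_i$ measure each copy independently in the eigenbasis of $O_i$, obtain outcomes $\lambda^{(1)}_i,\ldots,\lambda^{(T)}_i \in [-\norm{O}_\infty, \norm{O}_\infty]$, and form the empirical mean $\hat{\mu}_i = \tfrac{1}{T}\sum_t \lambda^{(t)}_i$. Each $\lambda^{(t)}_i$ is an unbiased estimator of $\Tr[\rho(\y)O_i]$, so Hoeffding's inequality yields
\begin{equation}
\Pr\!\left[\,\big|\hat{\mu}_i - \Tr[\rho(\y)O_i]\big| \geq \gamma/2\,\right] \;\leq\; 2\exp\!\left(-\,T\gamma^2/(8\norm{O}_\infty^2)\right).
\end{equation}
Choosing $T = \lceil c\,\norm{O}_\infty^2/\gamma^2\rceil$ for a suitable constant $c$ (e.g.\ $c=24$) makes this probability at most $1/3$. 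By the margin hypotheses (i)--(ii), thresholding $\hat{\mu}_i$ against $\alpha_i$ then recovers $y_i$ correctly with probability at least $2/3$.

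Second, I would reinterpret the $nT$-qubit state $\rho(\y)^{\otimes T}$ as a quantum encoding of the $k$-bit string $\y$ from which any chosen bit $y_i$ can be decoded with success probability at least $2/3$ using the fixed (non-adaptive) POVM described above. This is precisely the setting of a quantum random access code. Nayak's lower bound on QRACs \emph{(any encoding of $k$ bits into $M$ qubits from which each bit can be individually recovered with probability at least $p$ requires $M \geq (1 - H(p))\,k$)} gives
\begin{equation}
nT \;\geq\; (1 - H(2/3))\,k \;=\; \Omega(k).
\end{equation}
Substituting $T = \O(\norm{O}_\infty^2/\gamma^2)$ yields $n\norm{O}_\infty^2/\gamma^2 \in \Omega(k)$, as claimed.

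The main obstacle I anticipate is verifying that Nayak's bound applies in exactly this form, since the standard statement is for pure-state encodings with projective measurements. Mixed states are handled by a routine purification argument (absorbing the mixedness into an inaccessible ancilla), and the $T$-copy product structure poses no issue because Nayak's theorem only requires that each of the $k$ decoding measurements acts on the same encoding state and is chosen non-adaptively, both of which hold here. A minor secondary point is bookkeeping the constants so that Hoeffding gives success probability strictly above $1/2$ (so that $1-H(p) > 0$); this is straightforward, and any constant success probability bounded away from $1/2$ suffices for the asymptotic conclusion.
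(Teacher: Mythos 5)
Your proposal is correct and follows essentially the same route as the paper's proof: amplify the margin by measuring $\O(\norm{O}_\infty^2/\gamma^2)$ copies of $\rho(\y)$ (the paper applies a Chernoff bound to rescaled two-outcome POVMs built from the $O_i$, you apply Hoeffding directly to the eigenvalue outcomes) and then invoke the random access coding lower bound of Ambainis--Nayak--Ta-Shma--Vazirani, which indeed holds for mixed-state encodings. The remaining differences are cosmetic, and your constants and the handling of the threshold $\alpha_i$ check out.
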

\begin{proof}
We take a similar approach to the proof of Aaronson \cite{aaronson07}, in which we show that a combination of an encoding $\rho(\y)$ and observables $O_1, \ldots, O_k$ that satisfies guarantees (i) and (ii) would need $n\norm{O}_\infty^2/\gamma^2$ to scale linearly in the length $k$ of the bit-strings it encodes in order not to contradict with Holevo's bound.\\

Suppose by contradiction that such an encoding scheme exists with $n\norm{O}_\infty^2/\gamma^2 \in o(k)$. We first adapt to the setting of Aaronson by constructing two-outcome POVMs $\{E_i,I-E_i\}$ out of the observables $O_i$. That is, we take the general Hermitian matrices $O_i$ with eigenvalues in $[\lambda_{\text{min}}, \lambda_{\text{max}}] \subset [-\norm{O}_\infty, \norm{O}_\infty]$, and transform them into Hermitian matrices $E_i$ with eigenvalues in $[0, 1]$, such that the POVM $\{E_i,I-E_i\}$ accepts $\rho$ (i.e., outputs $1$) with probability $\text{Tr}(\rho E_i)$, and rejects $\rho$ (i.e., outputs $0$) with probability $1-\text{Tr}(\rho E_i)$. Specifically, we define
\begin{equation}
E_i = \frac{O_i + \abs{\lambda_{\text{min}}}I}{\abs{\lambda_{\text{min}}} + \abs{\lambda_{\text{max}}}}.
\end{equation}
Conditions $(i)$ and $(ii)$ then translate to:
\begin{equation}
    \begin{cases}
    (i')\text{ if } y_i=0 \text{, then } \Tr[\rho(\y)E_i] \leq  \frac{\alpha_i + \abs{\lambda_{\text{min}}}}{\abs{\lambda_{\text{min}}} + \abs{\lambda_{\text{max}}}} - \frac{\gamma}{\abs{\lambda_{\text{min}}} + \abs{\lambda_{\text{max}}}}\\
    (ii')\text{ if } y_i=1 \text{, then } \Tr[\rho(\y)E_i] \geq  \frac{\alpha_i + \abs{\lambda_{\text{min}}}}{\abs{\lambda_{\text{min}}} + \abs{\lambda_{\text{max}}}} + \frac{\gamma}{\abs{\lambda_{\text{min}}} + \abs{\lambda_{\text{max}}}}
    \end{cases}
\end{equation}
From here, by noting that $\abs{\lambda_{\text{min}}} + \abs{\lambda_{\text{max}}} \leq 2 \norm{O}_\infty$, we can directly apply Theorem 2.6 of Aaronson \cite{aaronson07} and get our result, but we detail the reasoning further for clarity.

We first need to amplify the probability that we correctly identify whether $y_i=0$ or $1$ from measuring copies of $\rho(\y)$, since the probabilities obtained from $E_i$ can be arbitrarily small. Consider an amplified scheme, where each bit-string $\y \in \{0,1\}^k$ is encoded by the tensor product $\rho(\y)^{\otimes \ell}$, for some $\ell>1$ to be defined later. For all $i \in \{1, \ldots, k\}$, let $\{E_i^*,I-E_i^*\}$ be the amplified POVM that applies $\{E_i,I-E_i\}$ to each of the $\ell$ copies of $\rho(\y)$ and accepts if and only if at least $\widetilde{\alpha}_i\ell = \frac{\alpha_i + \abs{\lambda_{\text{min}}}}{\abs{\lambda_{\text{min}}} + \abs{\lambda_{\text{max}}}}\ell$ of these POVMs do. For all $j \in \{1,\ldots, \ell\}$, call $X_i^{(j)}$ the random variable that takes the value $1$ if $\{E_i,I-E_i\}$ accepts the $j$-th copy of $\rho(\y)$ (i.e., with probability $p_i = \text{Tr}(\rho(\y) E_i)$), and value $0$ otherwise.\\

Consider the case where $y_i = 0$. We have $\text{Tr}[O_i \rho(y)] \leq \alpha_i - \gamma$, which implies that $\widetilde{\alpha}_i \geq p_i + \frac{\gamma}{\abs{\lambda_{\text{min}}} + \abs{\lambda_{\text{max}}}}$. Therefore, the probability that at least $\widetilde{\alpha}_i\ell$ of the POVMs accept is then:
\begin{equation}
	P(\overline{X}_i \geq \widetilde{\alpha}_i) \leq P\left(\overline{X}_i \geq p_i + \frac{\gamma}{\abs{\lambda_{\text{min}}} + \abs{\lambda_{\text{max}}}}\right)
\end{equation}
for  $\overline{X}_i = \frac{1}{\ell}\sum_{j=1}^{\ell} X_i^{(j)}$. From the Chernoff bound, we hence get:
\begin{equation}
	P(\overline{X}_i \geq \widetilde{\alpha}_i) \leq e^{-2\left(\frac{\gamma}{\abs{\lambda_{\text{min}}} + \abs{\lambda_{\text{max}}}}\right)^2l}.
\end{equation}
To guarantee an acceptance probability $\text{Tr}(\rho(\y)^{\otimes \ell}E_i^*) \leq 1/3$, it is then sufficient to take $\ell = \left\lceil\frac{2\log(3)\norm{O}_\infty^2}{\gamma^2}\right\rceil$. A similar analysis holds for the case $y_i = 1$.\\

From here, the result we use that derives from Holevo's bound is Theorem 5.1 of Ambainis \emph{et al.}~\cite{ambainis02}. It states that in order for the POVMs $\{E_i^*,I-E_i^*\}$ to correctly identify whether $y_i=0$ or $1$ with probability of failure less than $1/3$, we need a number of qubits $n\ell \geq (1-H(1/3))k$, where $H$ is the binary entropy function. This implies that $n\norm{O}_\infty^2/\gamma^2 \geq \frac{2(1-H(1/3))}{\log(3)} k \in \Omega(k)$. 
\end{proof}

\subsubsection{Generalization bounds of flipped models}

The conditions $(i)$ and $(ii)$ of \cref{thm:lower-bound-shattering} are very similar to that of a fat-shattering dimension of a concept class.

\begin{definition}
    Let $\X$ be a data space, let $\C$ be a class of functions from $\X$ to $\R$, and let $\gamma>0$. The fat-shattering dimension of the concept class $\C$ at width $\gamma$, denoted $\textnormal{fat}_{\C}(\gamma)$, is defined as the size $k$ of the largest set of points $\{\x^{(1)}, \ldots, \x^{(k)}\}$ for which there exist real numbers $\alpha_1, \ldots, \alpha_k$ such that for all $\y\in\{0,1\}^k$, there exists a $f\in\C$ that satisfies, for all $i \in \{1,\ldots,k\}$,\\[-0.5em]
    
    \hspace{1em} (i) if $y_i=0$, then $f(\x) \leq \alpha_i - \gamma$, and\\[-0.5em]
    
    \hspace{1em} (ii) if $y_i=1$, then $f(\x) \geq \alpha_i + \gamma$.\\[-0.5em]
\end{definition}

By comparing this definition with the statement of \cref{thm:lower-bound-shattering}, we can show:

\begin{corollary}\label{cor:fat-shattering-flipped}
Consider a flipped model $f_{\thv}(\x) = \Tr[\rho(\thv) O(\x)]$ defined on a data space $\X$, using $n$-qubit quantum states and observables with spectral norm $\norm{O}_\infty=\sup_{\x\in\X} \norm{O(\x)}_\infty$. Call $\C_{n,O} = \{f_{\thv}\}_{\thv}$ the concept (or hypothesis) class associated to this model. Then, for all $\gamma > 0$, we have $\textnormal{fat}_{\C_{n,O}}(\gamma) \in \O (n\norm{O}_\infty^2/\gamma^2)$.
\end{corollary}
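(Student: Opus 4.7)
The plan is to derive Corollary \ref{cor:fat-shattering-flipped} as an essentially immediate consequence of Theorem \ref{thm:lower-bound-shattering}, by recognizing that the definition of fat-shattering for the concept class $\C_{n,O}$ is structurally identical to the encoding hypothesis of the theorem, with the roles of ``data points'' and ``bit-string indices'' exchanged. In other words, the flipped model gives us an encoding $\y \mapsto \rho(\thv_\y)$ of bit-strings into $n$-qubit states, and a collection of observables $O(\x^{(i)})$ indexed by the shattered data points; the shattering conditions then match conditions (i) and (ii) of Theorem \ref{thm:lower-bound-shattering} verbatim. The rest of the argument is just bookkeeping.

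Concretely, I would let $k = \textnormal{fat}_{\C_{n,O}}(\gamma)$ and unpack the definition of the fat-shattering dimension: there exist points $\x^{(1)}, \ldots, \x^{(k)} \in \X$ and real offsets $\alpha_1, \ldots, \alpha_k$ such that for every bit-string $\y \in \{0,1\}^k$ one can find a parameter vector $\thv_\y$ with $f_{\thv_\y}(\x^{(i)}) \leq \alpha_i - \gamma$ when $y_i = 0$ and $f_{\thv_\y}(\x^{(i)}) \geq \alpha_i + \gamma$ when $y_i = 1$. Because $f_{\thv}$ is a flipped model, we have $f_{\thv_\y}(\x^{(i)}) = \Tr[\rho(\thv_\y) O(\x^{(i)})]$, so if I set $\rho(\y) := \rho(\thv_\y)$ and $O_i := O(\x^{(i)})$, the shattering inequalities are exactly the inequalities required by the theorem.

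Next I would verify the norm hypothesis: each $O_i = O(\x^{(i)})$ is a Hermitian observable whose spectral norm is at most $\|O\|_\infty = \sup_{\x \in \X}\|O(\x)\|_\infty$, so $\max_i \|O_i\|_\infty \leq \|O\|_\infty$. Applying Theorem \ref{thm:lower-bound-shattering} to the encoding $\y \mapsto \rho(\y)$ and the observables $O_1,\ldots,O_k$ then yields $n\|O\|_\infty^2/\gamma^2 \in \Omega(k)$, i.e.\ $k \in \O(n\|O\|_\infty^2/\gamma^2)$, which is the claim. The only minor care point is the theorem's standing assumption $k > n$: when $k \leq n$ the bound $k \in \O(n\|O\|_\infty^2/\gamma^2)$ holds trivially (since $\|O\|_\infty^2/\gamma^2 \geq$ some positive constant in any nondegenerate setting, and at worst we absorb the $k \leq n$ case into the constant), so the asymptotic statement is unaffected.

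I do not expect any genuine obstacle in this proof: the substantive work is already contained in Theorem \ref{thm:lower-bound-shattering}, whose proof invokes Chernoff amplification and Holevo-style communication bounds via Ambainis \emph{et al.}. The corollary is a translation of that information-theoretic statement into the language of statistical learning theory. The only subtlety worth flagging explicitly in the written proof is that the observables in the theorem must be shared across all encoded $\y$ (and only the state varies with $\y$); this is automatic here because the shattered points $\x^{(i)}$ are fixed by the definition of fat-shattering and only the parameter vector $\thv_\y$ depends on $\y$, so the identification $O_i = O(\x^{(i)})$ is well-defined.
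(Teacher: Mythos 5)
Your proof is correct and follows essentially the same route as the paper: both reduce the fat-shattering conditions directly to Theorem \ref{thm:lower-bound-shattering} by identifying the states $\rho(\thv_\y)$ with the bit-string encodings $\rho(\y)$ and the observables $O(\x^{(i)})$ at the shattered points with the $O_i$, using $\max_i\norm{O_i}_\infty \leq \norm{O}_\infty$. The paper phrases this via an intermediate concept class on observable-valued inputs, but the substance is identical, and your handling of the $k>n$ side condition is a fine bookkeeping remark.
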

\begin{proof}
We note that since, for all $\x\in\X$, $O(\x)$ lives in a manifold of all observables with spectral norm $\norm{O}_\infty$, then the fat-shattering dimension of $\C_{n,O}$ is upper bounded by that of the concept class $\widetilde{\C}_{n,O} = \{ O' \mapsto \text{Tr}[\rho({\thv}) O']\}_{\thv}$ defined on the input space of observables $O'$ that satisfy $\norm{O'}_\infty \leq \norm{O}_\infty$. Theorem \ref{thm:lower-bound-shattering} immediately yields an upper bound for $\textnormal{fat}_{\widetilde{\C}_{n,O}}(\gamma)$, when we identify $\rho(\thv)$ in this corollary to $\rho(\y)$ in the Theorem, and observe that the number of observables $O_1, \ldots, O_k$ that can be $\gamma$-shattered (i.e., conditions $(i)$ and $(ii)$ for all labelings) must satisfy $k \in \O (n\norm{O}_\infty^2/\gamma^2)$.
\end{proof}

To obtain generalization bounds on the performance of flipped models, we combine this bound on their fat-shattering dimension with standard results from learning theory, e.g.:

\begin{theorem}[Anthony and Barlett \cite{anthony95}]
Let $\X$ be a data space, let $\C$ be a class of functions from $\X$ to $\R$, and let $\D$ be a probability measure over $\X$. Fix an element $f \in \R^\X$, as well as error parameters $\eps, \eta, \gamma, \delta >0$ with $\gamma > \eta$. Suppose that we draw $m$ samples $X = (\x^{(1)}, \ldots, \x^{(m)})$ from $\X$ according to $\D$, and choose any hypothesis $h \in \C$ such that $\abs{h(\x) -  f(\x)} \leq \eta$ for all $\x \in X$. Then, there exists a positive constant $K$ such that, provided
$$ m \geq \frac{K}{\eps}\left( \textnormal{fat}_{\C} \left( \frac{\gamma-\eta}{8} \right) \log^2\left( \frac{\textnormal{fat}_{\C} \left( \frac{\gamma-\eta}{8} \right)} {(\gamma-\eta)\eps} \right) + \log\left(\frac{1}{\delta}\right)\right),$$
with probability $1-\delta$,
$$\textnormal{Pr}_{\x \in \D}[\abs{h(\x)-f(\x)}>\gamma] \leq \eps.$$
\end{theorem}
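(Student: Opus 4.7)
The plan is to follow the classical three-step chain of statistical learning theory: reduce uniform convergence to an empirical ghost-sample event, reduce that to a union bound over a finite cover, and then control cover size via the fat-shattering dimension. Concretely, I want to show that for $m$ at least the stated bound, the ``bad'' event $A$ that some $h \in \C$ satisfies $|h(\x^{(i)}) - f(\x^{(i)})| \leq \eta$ on all training points yet $\Pr_{\x \sim \D}[|h(\x) - f(\x)| > \gamma] > \eps$ has probability at most $\delta$.

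First, I would apply a symmetrization (double-sample) argument. Draw an independent ghost sample $X' = (\x'^{(1)}, \ldots, \x'^{(m)})$ from $\D$ and define $B$ as the event that some $h \in \C$ has $|h - f| \leq \eta$ on all of $X$ while at least $\eps m / 2$ points of $X'$ satisfy $|h - f| > (\gamma + \eta)/2$. A Chernoff estimate on the ghost sample yields $\Pr(A) \leq 2 \Pr(B)$ once $m \gtrsim 1/\eps$. Conditioning on the pooled multiset $X \cup X'$, the assignment of $m$ points to $X$ is uniformly random over permutations, so $\Pr(B \mid X \cup X')$ can be controlled by a union bound over the behaviours $(h(z))_{z \in X \cup X'}$ of $h \in \C$ on the $2m$ pooled points. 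Replace this possibly-infinite behaviour set by a finite $\xi$-cover in the $L_\infty$ norm at scale $\xi = (\gamma - \eta)/8$; the extra $\xi$ slack is absorbed because a cover element within $\xi$ of any ``bad'' $h$ is still $\leq \eta + \xi$ on $X$ and $\geq (\gamma+\eta)/2 - \xi$ on the violating $X'$-points, leaving a strictly positive gap. For each $\tilde{h}$ in the cover, a hypergeometric tail bound shows that the random permutation places more than $\eps m / 2$ of the ``large'' points into $X'$ with probability at most $2 e^{-c \eps m}$.

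Finally, I would invoke the Alon–Ben-David–Cesa-Bianchi–Haussler covering theorem (and its refinement by Bartlett–Long–Williamson), which bounds the $L_\infty$-covering number at scale $\xi$ on $2m$ points by $(2m/\xi)^{O(\textnormal{fat}_{\C}(\xi/4) \log(1/\xi))}$. Combining with the per-cover-element tail bound gives
\[
\Pr(B) \;\leq\; 2 \, (2m/\xi)^{O(\textnormal{fat}_{\C}(\xi/4) \log(1/\xi))} \, e^{-c \eps m},
\]
and requiring this to be at most $\delta/2$ and solving for $m$ produces the advertised lower bound on $m$, with the universal constant $K$ absorbing the numerical factors and the shift from $\xi/4$ to $\xi = (\gamma-\eta)/8$ inside the $\textnormal{fat}_{\C}$ argument. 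The main obstacle is precisely this covering-to-shattering reduction: it is a delicate combinatorial result (the scale-sensitive analogue of Sauer–Shelah) whose proof I would not reproduce, instead quoting it as a black box. A secondary bookkeeping difficulty is partitioning the gap $\gamma - \eta$ between the symmetrization slack, the cover radius, and the Chernoff threshold, so that the fat-shattering dimension is evaluated at exactly the stated scale $(\gamma - \eta)/8$ rather than at some weaker $(\gamma-\eta)/c$ for $c > 8$.
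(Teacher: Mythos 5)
This statement is not proven in the paper at all: it is imported verbatim from Anthony and Bartlett \cite{anthony95} and used as a black box to convert the fat-shattering bound of Corollary 3.3 into the generalization bound of Lemma 2. So there is no in-paper proof to compare against; the relevant comparison is with the proof in the cited source, and your outline is essentially that proof. The symmetrization to a ghost sample, the permutation (random-split) argument conditioned on the pooled $2m$ points, the $L_\infty$ cover on the pooled sample, and the Alon--Ben-David--Cesa-Bianchi--Haussler covering-number bound in terms of the fat-shattering dimension are exactly the ingredients of the standard ``learning from interpolation'' argument, so your route is the same as the source's rather than a genuinely different one. Two small points of care if you were to flesh it out: the per-cover-element event should be that \emph{all} of the $\geq \eps m/2$ ``large'' pooled points are assigned to the ghost half (since the hypothesis must be within $\eta$ of $f$ on every training point), giving the $2^{-\eps m/2}$ tail, rather than the event you state; and with your choice of cover radius $\xi=(\gamma-\eta)/8$ the quoted covering theorem hands you $\textnormal{fat}_{\C}(\xi/4)=\textnormal{fat}_{\C}((\gamma-\eta)/32)$, which proves a (slightly weaker) version of the statement with a larger constant inside the fat-shattering argument unless the gap is re-partitioned --- a bookkeeping issue you correctly flag but do not resolve. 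Since the paper only needs the theorem up to such constants (it feeds it into an $\widetilde{\Omega}$ bound), neither point affects how the result is used.
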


From Corollary \ref{cor:fat-shattering-flipped}, we have that $$m \in \widetilde{\Omega} \left(\frac{1}{\eps} \left(\frac{n\norm{O}_{\infty}^2}{(\gamma-\eta)^2}  + \log(\frac{1}{\delta})\right) \right)$$ samples suffice to get these generalization guarantees. This proves Lemma \ref{lem:gen-bounds}.

\subsection{Flipping bounds}\label{appdx:flipping-bounds}

In this section, we study mappings between conventional and flipped models (most importantly from conventional to flipped, but our flipping bounds can be used either way). We find that the important quantity that governs the trade-off in resources between these models is the observable trace norm $\norm{O}_1 = \Tr[\sqrt{O^2}]$ of the model to be mapped. Since all observables $O(\y)$ (where $\y$ is either a data vector or a parameters vector, depending on the model) have to be turned into unit-trace density matrices, their eigenvalues need (i) either to be normalized when preserving the number of qubits of the model, or (ii) be encoded in more qubits than in the original model (e.g., by using a binary encoding of all the eigenvalues of $O(\y)$). Each of these options has its disadvantages: (i) normalizing eigenvalues introduces an overhead in the spectral norm $\norm{O'}_{\infty}$ of the observable of the resulting model, which results in an overhead in the number of measurements needed to evaluate this model to the same precision as the original one. As for (ii), we show that the number of qubits of the new model would need to scale quadratically with $\norm{O}_1$ in general, which is commonly an exponential quantity in the number of qubits of the original model (e.g., when $O$ is a Pauli). We show the following lemma:

\begin{lemma}[Flipping bounds (informal)]\label{lem:flipping-bounds}
Any conventional linear model $f_{\thv}(\x)  = \Tr[\rho(\x)O(\thv)]$ acting on $n$ qubits and with a bounded observable trace norm $\norm{O}_1 \leq d$ admits an equivalent flipped model $\widetilde{f}_{\thv}(\x)  = \Tr[\rho'(\thv)O'(\x)]$ acting on $m=n+1$ qubits and with observable spectral norm $\norm{O'}_\infty = d$. The bound on the spectral norm is essentially tight in the regime where $n,m\!\in\!\O(\log(d))$, i.e., in this case we have $\norm{O'}_\infty\!\geq\!\widetilde{\Omega}(d)$.
\end{lemma}

	\subsubsection{Upper bounds}
	
\begin{theorem}\label{thm:upper-bound-flipping}
Given a specification of a conventional quantum model $f_{\thv}(\x) = \Tr[\rho(\x) O(\thv)]$ acting on $n$ qubits, with a known (upper bound on the) trace norm $\norm{O}_1$ of its observable, one can construct an equivalent flipped model $\widetilde{f}_{\thv}(\x) = \Tr[\rho'(\thv) O'(\x)]$ acting on $n+ 1$ qubits such that $\widetilde{f}_{\thv}(\x) = f_{\thv}(\x),\ \forall \x,\thv$ and $\norm{O'}_\infty = \norm{O}_1$.
\end{theorem}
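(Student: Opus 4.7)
The key idea is to use the extra ancilla qubit as a sign register for the eigenvalues of $O(\thv)$, so that the magnitudes can be turned into a probability distribution on the original $n$ qubits and the whole object becomes a valid quantum state $\rho'(\thv)$. The trace-norm normalization needed to make this work then has to be absorbed into the new data-encoding observable, which I take to be proportional to $\rho(\x)$.

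Concretely, I would first write $O(\thv) = V(\thv)OV^\dagger(\thv) = \sum_i \lambda_i\,\ket{\phi_i(\thv)}\!\bra{\phi_i(\thv)}$, split the eigenvalues into magnitudes $|\lambda_i|$ and signs $s_i \in \{+1,-1\}$, and identify $s_i = +1 \mapsto \ket{0}_{\mathrm{anc}}$, $s_i = -1 \mapsto \ket{1}_{\mathrm{anc}}$. Then define
\begin{equation*}
\rho'(\thv) \;=\; \frac{1}{\norm{O}_1}\sum_i |\lambda_i|\,\ket{s_i}\!\bra{s_i}_{\mathrm{anc}} \otimes \ket{\phi_i(\thv)}\!\bra{\phi_i(\thv)},\qquad O'(\x) \;=\; \norm{O}_1\,Z_{\mathrm{anc}}\otimes\rho(\x).
\end{equation*}
By construction $\rho'(\thv)$ is positive semi-definite with trace $\sum_i|\lambda_i|/\norm{O}_1 = 1$, hence a valid $(n+1)$-qubit state; since the eigenvectors of $O(\thv)$ are just $V(\thv)$ applied to those of the fixed operator $O$, one can equivalently write $\rho'(\thv) = (I\otimes V(\thv))\,\widetilde{\rho}_0\,(I\otimes V^\dagger(\thv))$ for a parameter-independent mixed state $\widetilde{\rho}_0$ built from the spectral data of $O$.

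The verification is essentially one line: using $Z\ket{s} = s\ket{s}$ on the ancilla,
\begin{equation*}
\Tr[\rho'(\thv)\,O'(\x)] \;=\; \sum_i |\lambda_i|\,s_i\,\Tr[\ket{\phi_i(\thv)}\!\bra{\phi_i(\thv)}\rho(\x)] \;=\; \Tr[O(\thv)\rho(\x)] \;=\; f_{\thv}(\x),
\end{equation*}
and $\norm{O'}_\infty = \norm{O}_1\cdot\norm{Z_{\mathrm{anc}}\otimes\rho(\x)}_\infty \leq \norm{O}_1$, with equality in the standard setting where $\rho(\x)$ is a pure state (so that $\norm{\rho(\x)}_\infty = 1$).

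I do not expect a serious obstacle here: the construction is essentially forced once one commits to using a single ancilla to carry the sign, and the unavoidable appearance of $\norm{O}_1$ as the new spectral norm is precisely the price of turning an observable into a state — exactly the trade-off that makes the flipping bound in \cref{lem:flipping-bounds} tight. The only bookkeeping point is that $\widetilde{\rho}_0$ must be prepareable (e.g., by classically sampling an index $i$ with probability $|\lambda_i|/\norm{O}_1$ and preparing $\ket{s_i}\otimes\ket{\psi_i}$ on the register); since the statement only asserts existence of the equivalent flipped representation, efficiency of this preparation is not required.
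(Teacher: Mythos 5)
Your construction is in substance the same as the paper's: grouping your eigenvalues by sign recovers exactly the decomposition $O(\thv)=O_+(\thv)-O_-(\thv)$ used there, your sign ancilla with $Z_{\mathrm{anc}}$ is their $\ket{0}\!\bra{0}-\ket{1}\!\bra{1}$ factor, and the flipped observable $O'(\x)=\norm{O}_1\,Z_{\mathrm{anc}}\otimes\rho(\x)$ is identical. The verification line and the spectral-norm accounting are fine.

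There is, however, one genuine gap. You write $\Tr[\rho'(\thv)]=\sum_i\abs{\lambda_i}/\norm{O}_1=1$, which silently assumes $\norm{O(\thv)}_1=\norm{O}_1$ exactly, for every $\thv$. The theorem only assumes a \emph{known upper bound} on the trace norm, and in the paper's Definition \ref{def:conventional-model} the weights $w_i$ may be trainable, so $\norm{O(\thv)}_1$ genuinely varies with $\thv$ and is in general strictly below $\norm{O}_1=\sup_{\thv}\norm{O(\thv)}_1$. In that case your $\rho'(\thv)$ has trace $\norm{O(\thv)}_1/\norm{O}_1<1$ and is not a state; and if you instead normalize by the true $\norm{O(\thv)}_1$, that $\thv$-dependent factor must reappear in $O'(\x)$, which is not permitted in a flipped model (the data observable cannot depend on $\thv$). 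The paper closes exactly this hole by padding the state with the deficit weight on the maximally mixed state, $\rho'(\thv)\mapsto\rho'(\thv)+p_0\,I/2^{n+1}$ with $p_0=\bigl(\norm{O}_1-\norm{O(\thv)}_1\bigr)/\norm{O}_1$, which leaves the expectation value unchanged because $\Tr\bigl[(Z_{\mathrm{anc}}\otimes\rho(\x))\,I/2^{n+1}\bigr]=0$. Adding this one step makes your argument complete. A smaller remark: your rewriting $\rho'(\thv)=(I\otimes V(\thv))\,\widetilde{\rho}_0\,(I\otimes V^\dagger(\thv))$ with a parameter-independent $\widetilde{\rho}_0$ likewise only holds when the weights (hence the spectrum of $O$) are not part of $\thv$, and the stated equality $\norm{O'}_\infty=\norm{O}_1$ requires $\norm{\rho(\x)}_\infty=1$, i.e.\ pure encoding states — though an inequality $\norm{O'}_\infty\leq\norm{O}_1$ only helps the sample complexity, so this is cosmetic.
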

\begin{proof}
From Def.~\ref{def:conventional-model}, we assume that, in the definition of the conventional model, $\rho(\x)$ is obtained by applying a unitary $U(\x)$ on a known quantum state $\rho_0$ that is diagonal in the computational basis (i.e., is a mixture of computational basis states). As for $O(\thv)$, we assume that it is specified by a unitary $V(\thv)$ and a weighted sum of $d$ Hermitian operators $O_i$, such that $O(\thv) = \sum_{i=1}^d w_i V^\dagger(\thv) O_i V(\thv)$ and for each $i\in\{1, \ldots, d\}$ we know how to decompose $O_i$ as $O_i = \sum_{j=0}^{2^n-1} \lambda_{i,j} W_i^\dagger\ket{j}\!\bra{j}W_i$, for some known $\lambda_{i,j}$'s and $W_i$'s. Note that, as opposed to the parameters that specify $V(\thv)$, the weights $w_i$ influence the trace norm $\norm{O(\thv)}_1$. Therefore we need to pay attention to the fact that $\norm{O(\thv)}_1$ is only upper bounded by $\norm{O}_1 = \sup_{\thv}\norm{O(\thv)}_1$, and that these two quantities are not always equal.\\

\begin{figure}[ht]
  \centering
\begin{minipage}{.82\linewidth}
\DecMargin{0.2em}
\begin{algorithm2e}[H]\label{alg:flipping}
\caption{Flipped evaluation of a conventional model}
 \KwIn{an $n$-qubit unitary $U(\x)$ and a quantum state $\rho_0$ (diagonal in the computation basis) such that $\rho(\x)=U(\x)\rho_0U^\dagger(\x)$, $n$-qubit unitaries $V(\thv)$ and $\{W_i\}_{1\leq i\leq d}$, real values $\{w_i\}_{i=1}^d$, $\{\lambda_{i,j}\}^{1\leq i\leq d}_{0\leq j\leq 2^n-1}$, such that $O(\thv) = \sum_i w_i V^\dagger(\thv) O_i V(\thv)$ with $O_i = \sum_j \lambda_{i,j} W_i^\dagger\ket{j}\!\bra{j}W_i$.}
 \KwOut{A flipped evaluation of the conventional model $\Tr[\rho(\x) O(\thv)]$}
 Initialize $o = 0$, $N=\O(\norm{O}_1^2/\eps^2)$\;
 \For{$N$ iterations}{
  Sample $i \in \{1, \ldots, d+1\}$ w.p.~$\left(\frac{w_1\norm{O_1}_1}{\norm{O}_1}, \ldots, \frac{w_d\norm{O_d}_1}{\norm{O}_1}, \frac{\norm{O}_1-\sum_{i=1^d}w_i\norm{O_i}_1}{\norm{O}_1} \right)$\;
  \If{$i = d+1$}{
  break iteration (or alternatively prepare $\sigma(\thv) = I/2^{n+1}$ and jump to line \ref{alg:goto})\;
  }
  Sample $b \in \{+, -\}$ w.p.~$\frac{\norm{O_{i,b}}_1}{\norm{O_i}_1}$\;
  Sample $j \in \{0, \ldots, 2^n-1\}$ w.p.~$\frac{\max(0, b\lambda_{i,j})}{\norm{O_{i,b}}_1}$\;
  Prepare $\sigma(\thv) = |\widetilde{b}\rangle\langle\widetilde{b}| \otimes V^\dagger(\thv) W_i^\dagger\ket{j}\!\bra{j}W_i V(\thv)$, for $\widetilde{b}=2b-1$\;
  Measure $(I\otimes U^\dagger(\x))\sigma(\thv)(I\otimes U(\x))$ in the computational basis and call the outcome $|\widetilde{b}\rangle\otimes\ket{j}$\;\label{alg:goto}
  $o \leftarrow o + b\norm{O}_1\rho_{0,j}$, where $\rho_{0,j}$ is the $j$-th diagonal element of $\rho_0$\;
 }\Return{$o/N$}
\end{algorithm2e}
\end{minipage}
\end{figure}

Out of the observables $O(\thv)$, we need to prepare quantum states $\rho'(\thv)$ such that $\Tr[\rho'(\thv) O'(\x)] = \Tr[\rho(\x) O(\thv)]$. The only difficulty is that quantum states are positive semi-definite and have unit trace while Hermitian observables generally do not fulfill any of these two conditions. To get around these constraints, we simply decompose the observables $O(\thv)$ into positive and negative components, that we both normalize. More precisely, call $O_+(\thv)$ ($O_-(\thv)$) the positive (negative) part of $O(\thv) = O_+(\thv) - O_-(\thv)$. We define:
\begin{equation}
  \begin{cases}
  \rho'_+(\thv) = O_+(\thv)/\norm{O_+(\thv)}_1\\
  \rho'_-(\thv) = O_-(\thv)/\norm{O_-(\thv)}_1
  \end{cases}
\quad\text{and}\quad\quad\quad
  \begin{cases}
  p_+ = \norm{O_+(\thv)}_1/\norm{O(\thv)}_1\\
  p_- = \norm{O_-(\thv)}_1/\norm{O(\thv)}_1
  \end{cases}
\end{equation}
such that
\begin{equation}
\rho'(\thv) = p_+ \ket{0}\!\bra{0} \otimes \rho'_+(\thv) + p_- \ket{1}\!\bra{1} \otimes \rho'_-(\thv)
\end{equation}
is a valid quantum state (positive semi-definite and unit trace). We can then take
\begin{equation}
O'(\x) = \norm{O(\thv)}_1 (\ket{0}\!\bra{0} - \ket{1}\!\bra{1})\otimes \rho(\x)
\end{equation}
which, as one can easily verify, leads to $\Tr[\rho'(\thv) O'(\x)] = \Tr[\rho(\x) O(\thv)]$.\\
However, this still does not give us a proper flipped model as the renormalization factor $\norm{O(\thv)}_1$ of $O'(\x)$ can depend on the parameters $\thv$ (and more precisely the weights $w_i$, see remark above). We would like to use here the upper bound $\norm{O}_1$. To do so, we can simply (re-)define:
\begin{equation}
  \begin{cases}
  p_+ = \norm{O_+(\thv)}_1/\norm{O}_1\\
  p_- = \norm{O_-(\thv)}_1/\norm{O}_1
  \end{cases}
  \quad\text{and}\quad\quad p_0 = \frac{\norm{O}_1-\norm{O_+(\thv)}_1-\norm{O_-(\thv)}_1}{\norm{O}_1}
\end{equation}
and also
\begin{equation}
\begin{cases}
\rho'(\thv) = p_+ \ket{0}\!\bra{0} \otimes \rho'_+(\thv) + p_- \ket{1}\!\bra{1} \otimes \rho'_-(\thv) + p_0 I/2^{n+1}\\
O'(\x) = \norm{O}_1 (\ket{0}\!\bra{0} - \ket{1}\!\bra{1})\otimes \rho(\x)
\end{cases}
\end{equation}
such that we still have $\Tr[\rho'(\thv) O'(\x)] = \Tr[\rho(\x) O(\thv)]$ but where we have now defined a proper flipped model.
\end{proof}

Going a bit further, we also propose an algorithm to evaluate the flipped model we constructed in our proof (see Algorithm \ref{alg:flipping}).
Given that we only assume to know the eigenvalue decomposition of the single $O_i$'s, and not that of the full observable $O(\thv)$, we do not decompose $O(\thv)$ directly into its positive and negative components, but rather the $O_i$'s. For this, we define:
\begin{equation}
  \begin{cases}
  O_{i,+} = \sum_{j, \lambda_{i,j} \geq 0} \lambda_{i,j} W_i^\dagger\ket{j}\!\bra{j}W_i\\
  O_{i,-} = \sum_{j, \lambda_{i,j} \leq 0} \abs{\lambda_{i,j}} W_i^\dagger\ket{j}\!\bra{j}W_i
  \end{cases}
\end{equation}
We then recover $\rho'(\thv)$ via importance sampling of the indices $i,j$ and $\pm$ and the implementation of the pure state $V(\thv) W_i\ket{j}$ (see Algorithm \ref{alg:flipping})\footnote{In Algorithm \ref{alg:flipping}, $\sigma(\thv)$ corresponds to either $\rho'_+(\thv)$ or $\rho'_-(\thv)$, depending on the sampled $b \in \{+, -\}$.}. Naturally, one could alternatively design a full unitary implementation of $\rho'(\thv)$ using auxiliary qubits to prepare coherent encodings of the probability distributions appearing in Algorithm \ref{alg:flipping}, along with controlled operations between these auxiliary qubits and the working register, but this would require more qubits and a more complicated quantum implementation.

	\subsubsection{Lower bounds}
	
\begin{theorem}\label{thm:lower-bound-flipping}
For $d$ an arbitrary positive integer, there exists a conventional model $\Tr[\rho(\x)O(\y)]$, acting on $n \in \O(\log(d))$ qubits, that satisfies $\norm{O(\y)}_1 = d$ for all $\y \in \mathcal{Y}$, such that for any flipped model $\Tr[\rho'(\y)O'(\x)]$ acting on $m$ qubits with $\norm{O'}_\infty = \sup_{\x\in\X} \norm{O'(\x)}_\infty$, and any $\eps \geq 0$, if the model satisfies
\begin{equation}\label{eq:approx-condition}
\abs{\Tr[\rho(\x)O(\y)] - \Tr[\rho'(\y)O'(\x)]} \leq \eps \quad \forall \x,\y \in \X \times \mathcal{Y}
\end{equation}
then, it must also satisfy
$$m\norm{O'}_\infty^2 \in \Omega(d^2(1/2-\eps)^2).$$
\end{theorem}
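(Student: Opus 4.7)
The strategy is to reduce this lower bound on flipping to the bit-string encoding lower bound of Theorem~\ref{thm:lower-bound-shattering}. Concretely, I would exhibit one specific conventional model $\Tr[\rho(\x)O(\y)]$ on $n=\lceil\log_2 d\rceil$ qubits whose values over $\x,\y$ realise an $\Omega(d^2)$-bit shattering with constant margin. Since any flipped model $\Tr[\rho'(\y)O'(\x)]$ that is $\eps$-close on all $(\x,\y)$ then inherits the same shattering (with margin only shrunk by $\eps$), plugging the flipped model into Theorem~\ref{thm:lower-bound-shattering} immediately forces $m\norm{O'}_\infty^2\in\Omega(d^2(1/2-\eps)^2)$.

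\emph{Step 1 (construction).} Pick a collection of $K=\Theta(d^2)$ pure states $\{\ket{\psi_i}\}_{i=1}^{K}$ on $\C^d$ together with a traceless Hermitian ``dual frame'' $\{H_i\}_{i=1}^{K}$ satisfying a biorthogonality relation of the form $\Tr[\ket{\psi_i}\!\bra{\psi_i}H_j]=c\,\delta_{ij}$ with $c=\Theta(1)$ (a 2-design or Weyl/Pauli-operator basis with the associated eigenstates are natural candidates). Index inputs by $\x_i\in[K]$ and parameters by $\y\in\{0,1\}^{K}$; set $\rho(\x_i)=\ket{\psi_i}\!\bra{\psi_i}$ and
\begin{equation*}
O(\y)=\frac{d}{Z(\y)}\sum_{i=1}^{K}(2y_i-1)\,H_i,\qquad Z(\y)=\Big\|\sum_{i=1}^{K}(2y_i-1)H_i\Big\|_1,
\end{equation*}
so that $\norm{O(\y)}_1=d$ identically. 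Biorthogonality then yields $\Tr[\rho(\x_i)O(\y)]=(cd/Z(\y))(2y_i-1)$, i.e.\ each bit $y_i$ is readable with margin $cd/Z(\y)$. The key input is a uniform bound $Z(\y)=\Theta(cdK)$ so that this margin is $\Omega(1)$; rescaling constants, I can assume margin~$\tfrac12$.

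\emph{Step 2 (shattering transfer).} If $\widetilde f_{\y}(\x)=\Tr[\rho'(\y)O'(\x)]$ satisfies the approximation condition \eqref{eq:approx-condition}, then evaluating at $\x=\x_i$ and applying the triangle inequality with $\alpha_i$ the centered target value gives
\begin{equation*}
\Tr[\rho'(\y)O'(\x_i)]\ \geq\ \alpha_i+\bigl(\tfrac12-\eps\bigr)\ \text{ if }y_i=1,\qquad \Tr[\rho'(\y)O'(\x_i)]\ \leq\ \alpha_i-\bigl(\tfrac12-\eps\bigr)\ \text{ if }y_i=0.
\end{equation*}
With $\rho'(\y)$ playing the role of the encoding $\rho(\y)$ and $\{O'(\x_i)\}_{i=1}^{K}$ (each with $\norm{O'(\x_i)}_\infty\leq\norm{O'}_\infty$) playing the role of the observables $\{O_i\}$, this is exactly the hypothesis of Theorem~\ref{thm:lower-bound-shattering} on $K=\Omega(d^2)$ bits with margin $\gamma=\tfrac12-\eps$.

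\emph{Step 3 (conclusion).} Theorem~\ref{thm:lower-bound-shattering} then yields $m\,\norm{O'}_\infty^2/(\tfrac12-\eps)^2\in\Omega(K)=\Omega(d^2)$, which is the claimed bound $m\,\norm{O'}_\infty^2\in\Omega(d^2(1/2-\eps)^2)$.

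\emph{Main obstacle.} The nontrivial step is the construction in Step~1: I need a frame of size $K\in\Omega(d^2)$ on $\C^d$ together with a dual operator basis such that (a) the biorthogonality constant $c$ is $\Omega(1)$, (b) $\y$-uniform control $Z(\y)=\Theta(cdK)$ of the trace norm holds, so that the per-bit margin survives the normalization to $\norm{O(\y)}_1=d$. This can be attacked either explicitly (Pauli/Weyl basis with stabilizer eigenstates, using $\norm{\sum_i\epsilon_i P_i}_1$ concentration up to logarithmic slack, which matches the logarithmic gap with the upper bound of Theorem~\ref{thm:upper-bound-flipping}) or probabilistically, by choosing random signs and invoking a Rudelson-type matrix concentration inequality to control $Z(\y)$ uniformly with high probability.
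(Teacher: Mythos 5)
Your Steps 2 and 3 are exactly the paper's argument: build one conventional model on $n=\O(\log d)$ qubits whose values shatter $\Omega(d^2)$ bits with constant margin, transfer the shattering to any $\eps$-close flipped model (margin $1/2-\eps$, with $\rho'(\y)$ as the encoding and $\{O'(\x_i)\}$ as the measurements), and invoke Theorem~\ref{thm:lower-bound-shattering}. The gap is Step 1, which you yourself flag as the main obstacle — and it is a real one, not a routine verification. First, your stated sufficient condition is arithmetically inconsistent: if $Z(\y)=\Theta(cdK)$ then the normalized margin is $cd/Z(\y)=\Theta(1/K)$, which vanishes; what you actually need is the much stronger uniform bound $Z(\y)=\O(cd)$. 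Second, your proposed instantiations do not deliver this. For the Pauli/Weyl route with $K=d^2$, every sign pattern satisfies $\norm{\sum_i\epsilon_i P_i}_1\geq\norm{\sum_i\epsilon_i P_i}_2=\sqrt{Kd}=d^{3/2}$ (and typically the trace norm is $\widetilde{\Theta}(d^2)$), so after renormalizing to trace norm $d$ the margin is at most $d^{-1/2}$ and Theorem~\ref{thm:lower-bound-shattering} only gives $m\norm{O'}_\infty^2\in\Omega(d)$, not $\Omega(d^2)$; concentration inequalities only confirm that $Z(\y)$ is large, they cannot make it small. So as written the proof does not go through: everything hinges on exhibiting a frame whose signed sums have trace norm $\O(d)$ while each bit remains readable with constant margin, and that is precisely what is missing.

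Such a construction does exist, and it is the heart of the paper's proof: for $N=\lfloor d^2/4\rfloor$, take $\rho(i)=\frac{1}{2}(\ket{0}+\ket{i})(\bra{0}+\bra{i})$ and encode $\y\in\{0,1\}^N$ in $O(\y)=\sum_{i:\,y_i=1}(\ket{0}\!\bra{i}+\ket{i}\!\bra{0})$, padded by $(d-\norm{\cdot}_1)\ket{N+1}\!\bra{N+1}$ on an unused basis state so that $\norm{O(\y)}_1=d$ exactly. This is a biorthogonal system in your sense ($\Tr[\rho(i)(\ket{0}\!\bra{j}+\ket{j}\!\bra{0})]=\delta_{ij}$), but the dual operators all share the hub vector $\ket{0}$, so the encoded block is a star-graph adjacency matrix of rank two with trace norm only $2\sqrt{\abs{\y}}\leq d$ — no division by $Z(\y)$ is ever performed, and $\Tr[\rho(i)O(\y)]=y_i$ exactly, giving margin $1/2$ around $\alpha_i=1/2$. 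In other words, the uniform trace-norm control you hoped to get from incoherent frames plus concentration is instead obtained structurally, by making the signed sum low rank; with that construction in place, your Steps 2–3 coincide with the paper's proof verbatim.
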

\begin{proof}
The core of the proof is to show that a conventional model $\text{Tr}[\rho(i) O(\y)]$ with trace norm $\norm{O(\y)}_1 = d$ and acting on $n=\lceil\log_2(N+1)\rceil$ qubits, for $N = \lfloor d^2/4 \rfloor$, can represent the function $i \mapsto y_i$ for $1 \leq i \leq N$, for all $\y\in\{0,1\}^{N}$. For this, we take, for all $1 \leq i \leq N$:
\begin{equation}
\rho(i) = \frac{1}{2}(\ket{0} + \ket{i})(\bra{0} + \bra{i})
\quad \text{and } \quad
O(\y) = \sum_{i'=1}^{N+1} O_{i'}(\y)
\end{equation}
for
\begin{equation}
O_i(\y) =
\begin{cases}
y_{i} (\ket{0}\!\bra{i} + \ket{i}\!\bra{0}) \quad \text{if } 1 \leq i \leq N\\
(d-\sqrt{\abs{\y}}) \ket{N+1}\!\bra{N+1} \quad \text{if } i=N+1
\end{cases}
\end{equation}
where $\abs{\y}=\sum_{i=1}^{N}y_i$ is the Hamming weight of $\y$. By construction, we have that the upper-left $N\times N$ block of $O(\y)$ satisfies $\norm{O_{(N\times N)}(\y)}_1 = \sqrt{\abs{\y}} \leq d$ (it corresponds to the adjacency matrix of a star graph of degree $D = \abs{\y} \leq N$, which has trace norm $2\sqrt{D}$ \cite{wang17}), such that $\norm{O(\y)}_1 = d$ for all $\y \in \{-1,1\}^{N}$. Also, it is easy to check that $\Tr[\rho(i)O(\y)] = y_{i}$ for all $1 \leq i \leq N$.\\
We take this conventional model to be our target model. Satisfying the condition of Eq.~(\ref{eq:approx-condition}) is then equivalent to satisfying:
$$ \abs{\Tr[\rho'(\y)O'(i)] - y_{i}} \leq \eps \quad \forall\ i, \y \in \{1, \ldots, N\}\times\{0,1\}^{N}.$$ 
Now note that this condition is stronger than that of Theorem \ref{thm:lower-bound-shattering} for $\gamma = 1/2-\eps$ and $\alpha_{i,j} = 1/2\ \forall {i,j}$. Therefore, in order not to contradict with this theorem, we must have $m\norm{O'}_\infty^2 \in \Omega(N(1/2-\eps)^2) = \Omega(d^2(1/2-\eps)^2)$.
\end{proof}

Lemma \ref{lem:flipping-bounds} is obtained from Theorem \ref{thm:upper-bound-flipping} as stated above and Theorem \ref{thm:lower-bound-flipping} for the case $m \in \O(\log d)$, such that the statement $m\norm{O'}_\infty^2 \in \Omega(d^2(1/2-\eps)^2)$ becomes $\norm{O'}_\infty \in \Omega\left(\frac{d}{\sqrt{\log(d)}}(1/2-\eps)\right)$.

\subsubsection{Circumventing lower bounds}

Note that our flipping bounds are essentially tight only in the regime where the number of qubits used by the original and resulting model $n,m$ are both in $\O(\text{polylog}(\norm{O}_1))$. This is a relevant regime, as it includes notably the case of Pauli observables (be they local or non-local) or linear combinations thereof. However, outside this regime our bounds can be circumvented.

Also note that an easy way of circumventing our lower bounds, even in the regime where $n,m \in \O(\text{polylog}(\norm{O}_1))$, is by imposing the constraint that $O(\y)$ is parametrized by $\abs{\y}=\O(\text{poly}(n))$ parameters\footnote{In our proof of Theorem \ref{thm:lower-bound-flipping}, we use $\abs{\y}~\in~\Omega(\norm{O}_1^2)$, which is in $\Omega(\text{exp}(n))$ for $n,m \in \O(\log(\norm{O}_1))$, and subexponential in $n$ for $n,m \in \O(\text{polylog}(\norm{O}_1))$}. Indeed, in this case, one can simply use a similar construction to that in Ref.~\cite{jerbi21b} (Fig. 3) where one would encode $\y$ (e.g., in binary form) in auxiliary qubits as $\ket{\widetilde{\y}}$ and use controlled operations that are independent of $\y$ to simulate the action of gates parametrized by $\y$. One can then note that by taking $\rho'(\thv) = \rho_0 \otimes \ket{\widetilde{\y}}\!\bra{\widetilde{\y}}$ and the rest of the resulting circuit to define $O'(\x)$, one ends up with a flipped model that acts on $\O(\text{poly}(n))$ qubits and satisfies $\norm{O'}_\infty = \norm{O}_\infty$. For $O$ defined by a Pauli observable for instance, we have $\norm{O}_\infty=1$ and $\norm{O}_1=2^n$. Such a construction therefore does not suffer from an exploding spectral norm $\norm{O'}_\infty$. However, it also does not lead to shadowfiable models in general as the parametrized states $\rho'(\thv)$ play a trivial role and the observables $O'(\x)$ hide all of the quantum computation.

\section{Quantum advantage using shadow models}\label{appdx:quantum-advantage}

\subsection{Discrete cube root learning task}
In this section we rigorously define the discrete cube root learning task introduced in the main text and detail the proof of its classical hardness. More precisely, we start by introducing the discrete cube root problem and state formally its classical hardness assumption (the discrete cube root assumption). Then we construct a learning task that is classically hard based on this assumption.

\subsubsection{The discrete cube root problem}\label{sec:DCR}
A definition of the discrete cube root problem can be found in Ref.~\cite{kearns94}. For convenience, we restate it in this appendix.\footnote{Note that, as opposed to the exposition of Ref.~\cite{kearns94}, we consider the domain $\Z=\{0, \ldots, N-1\}$ instead of $\{i \;|\; 0<i<N, \; \text{gcd}(i,N)=1\}$ for the functions we define next. This allows us to apply more easily the result of Ref.~\cite{alexi88} and construct a learning task with a stronger form of classical hardness.}
Consider two large prime numbers $p$ and $q$ of the form $3k+2,\ 3k'+2$, for distinct $k,k'$, and which can be represented by approximately the same number of bits. Let $N=pq$ be the product of these primes, which we assume to be an $n$-bit integer, and let $\Z=\{0, \ldots, N-1\}$.

We consider the ``discrete cube" function $f_N(y):\Z\rightarrow \Z$ defined as $f_N(y)=y^3 \text{ mod } N$, as well as its inverse, which we denote as $g_N(x) = f^{-1}_N(x) = \sqrt[3]{x} \text{ mod } N$ (see Fig.~\ref{fig:DCR}). As we explain in the following, $f_N$ is particularly interesting because it is believed to be a \emph{one-way function}: $f_N(y)$ can be computed efficiently classically using modular exponentiation, while $g_N(x)$ is believed hard to compute classically, with only knowledge of $N$ and $x$ (and not the factors $p,q$ of $N$). But first, let us show that the inverse function $g_N$ is properly defined.

\begin{figure}[t]
	\centering
	\includegraphics[width=0.6\linewidth]{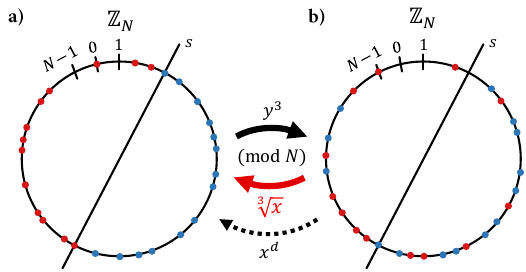}
	\caption{\textbf{A visualization of the functions involved in the quantum advantage learning task.} The core functions of this task map $\Z=\{0,\ldots,N-1\}$ to itself, for $N$ a large semiprime. a) In feature space, data is linearly separable by a hyperplane parametrized by a certain $s\in\Z$. One can efficiently transform data $y$ in feature space into its corresponding data $x$ in input space via the ``discrete cube" function $x = y^3 \text{ mod } N$. b) To a fully classical learner, data in input space looks randomly labeled, as inverting it back to feature space via the discrete cube root function $y = \sqrt[3]{x} \text{ mod } N$ is believed to be classically intractable. However, a shadow model can make use of the trap-door property of the discrete cube root function to efficiently compute a key $d\in\Z$ using a quantum computer and classically map data to feature space through the transformation $y = x^d \text{ mod } N$.}
	\label{fig:DCR}
\end{figure}

\begin{lemma}
    For $p,q$ two distinct prime numbers of the form $3k+2,\ 3k'+2$ and $N=pq$, the function $f_N(y)=~y^3~\text{ mod } N$ is a bijection on $\Z$.
\end{lemma}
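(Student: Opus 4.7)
The plan is to reduce the problem to cubing on the prime-order components via the Chinese Remainder Theorem (CRT), and then show that cubing is a bijection on each $\mathbb{Z}_p$ and $\mathbb{Z}_q$ using that $p,q\equiv 2\pmod 3$.

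First, I would invoke CRT: since $p,q$ are distinct primes and $N=pq$, we have a ring isomorphism $\mathbb{Z}_N \cong \mathbb{Z}_p \times \mathbb{Z}_q$, under which $f_N$ corresponds to the coordinate-wise cubing map $(a,b)\mapsto (a^3 \bmod p,\ b^3 \bmod q)$. Hence $f_N$ is a bijection on $\mathbb{Z}_N$ if and only if the cubing maps $f_p:\mathbb{Z}_p\to\mathbb{Z}_p$ and $f_q:\mathbb{Z}_q\to\mathbb{Z}_q$ are both bijections.

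Next, I would prove that cubing is a bijection on $\mathbb{Z}_p$ for any prime $p$ of the form $3k+2$. The element $0$ is fixed by cubing, so it suffices to check that cubing is a bijection on the multiplicative group $\mathbb{Z}_p^\ast$. Since $\mathbb{Z}_p^\ast$ is cyclic of order $p-1 = 3k+1$, the map $x\mapsto x^3$ is a group homomorphism whose kernel is the set of cube roots of unity in $\mathbb{Z}_p^\ast$. By Lagrange's theorem, the kernel has size $\gcd(3,p-1) = \gcd(3,3k+1) = 1$, so the map is injective, hence bijective on the finite set $\mathbb{Z}_p^\ast$, and therefore on all of $\mathbb{Z}_p$. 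The identical argument with $q = 3k'+2$ shows that $f_q$ is a bijection on $\mathbb{Z}_q$.

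Combining these observations via CRT yields that $f_N$ is a bijection on $\mathbb{Z}_N$, as claimed. The key step is really just the observation that $\gcd(3,p-1)=1$ whenever $p\equiv 2\pmod 3$, which is the whole reason the primes are chosen in this restricted form; the rest is a standard CRT argument. There isn't really an obstacle — the only subtlety worth noting explicitly is that the domain $\mathbb{Z}_N$ includes elements not coprime to $N$ (namely $0$ and multiples of $p$ or $q$), but the CRT reduction handles these uniformly since cubing is separately a bijection on each factor $\mathbb{Z}_p$ and $\mathbb{Z}_q$, including their zero elements.
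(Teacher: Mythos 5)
Your proof is correct, but it takes a different route from the paper's. You decompose $\mathbb{Z}_N \cong \mathbb{Z}_p \times \mathbb{Z}_q$ via the Chinese Remainder Theorem and then argue group-theoretically: cubing on the cyclic group $\mathbb{Z}_p^\ast$ has trivial kernel because $\gcd(3,p-1)=\gcd(3,3k+1)=1$, and $0$ is handled separately (one small phrasing quibble: the fact that the kernel has size exactly $\gcd(3,p-1)$ uses cyclicity of $\mathbb{Z}_p^\ast$, not Lagrange alone, though Lagrange applied to element orders also suffices to rule out an element of order $3$). The paper instead proves injectivity directly and constructively: from $\gcd(3,p-1)=1$ it extracts, via Euclid's algorithm, an exponent $d_p$ with $3d_p=(p-1)d_p'+1$, and uses Fermat's little theorem to show $y^{3d_p}\equiv y \pmod p$, so that raising $y^3\equiv z^3$ to the power $d_p$ forces $y\equiv z$ modulo $p$ and $q$, hence modulo $N$. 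The two arguments hinge on the same arithmetic fact about $p,q\equiv 2 \pmod 3$; yours is arguably cleaner and more structural, while the paper's has the advantage of explicitly exhibiting the inverse-exponent construction, which it reuses immediately afterwards to establish the trapdoor key $d$ satisfying $3d\equiv 1 \pmod{\phi(N)}$ that underlies the learning-separation argument.
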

\begin{proof}
    To show that $f_N$ is a bijection on $\Z$, it is sufficient to show that it is injective, since it maps $\Z$ to itself.\\
    Consider $y,z\in\Z$ such that $f_N(y)=f_N(z)$, i.e., $y^3 \text{ mod } N = z^3 \text{ mod } N$. We then have $y^3-z^3 \equiv 0 \text{ mod } N$, which means that there exists an $l\in\mathbb{N}$ such that $y^3-z^3 = lN = lpq$. Therefore, we know that $y^3-z^3$ is divisible by both $p$ and $q$, which implies $y^3 \equiv z^3 \text{ mod } p$ and $y^3 \equiv z^3 \text{ mod } q$. From here, we apply a similar reasoning for $p$ and $q$, that we detail only for $p$.\\
    Given that we assumed $p=3k+2$, we know that $\text{gcd}(p-1,3)=1$. Therefore, Euclid's algorithm assures that there exist $d_p,d_p'\geq 1$ such that $3d_p=(p-1)d_p'+1$. Then notice that:
    \begin{equation}\label{eq:y3}
        y^{3d_p} \equiv y^{(p-1)d_p'+1} \equiv y \text{ mod } p
    \end{equation}
    where the last congruence follows from applying Fermat's little theorem ($y^p \equiv y \text{ mod } p$ for all $y\in \mathbb{N}$) to show by induction on $m \geq 0$ that $y^{(p-1)m+1} \equiv y \text{ mod } p$. Similarly, $z^{3d_p} \equiv z \text{ mod } p$, and therefore by raising to the power $d_p$ both terms in $y^3 \equiv z^3 \text{ mod } p$, we get $y \equiv z \text{ mod } p$.\\
    From the same reasoning, we get $y \equiv z \text{ mod } q$, which implies that $x-y$ is divisible by both $p$ and $q$. Since they are distinct primes, then $x-y$ is also divisible by $pq=N$, which means that $x\equiv y \text{ mod } N$. This shows that $f_N$ is injective, and therefore bijective.
\end{proof}

Now that we have shown that the discrete cube root function is properly defined on $\Z$, let us define the discrete cube root problem:

\begin{definition}[Discrete cube root problem]
    Let $p$ and $q$ be two distinct primes of the form $3k + 2, 3k'+2$ represented by approximately the same number of bits, and such that $N=pq$ is an $n$-bit integer. Given as input both $N$ and $x\in \Z$, output $y\in \Z$ such that $y^3=x \text{ mod } N$.
\end{definition}

The assumption that the ``discrete cube" function $f_N$ is a one-way function is formalized by the so-called discrete cube root assumption, which is a special case of the RSA assumption for the exponent $e=3$.

\begin{definition}[Discrete cube root assumption \cite{kearns94}]\label{def:DCR-assumption}
  For any polynomial $P(.)$, there does not exist a classical algorithm  $\A$, that runs in time $P(n)$ and that, on input $N$ and $x$ (where $N$ is the $n$-bit product of two random primes of the form $3k+2$ and $x$ is chosen randomly from $\Z $), outputs $y\in \Z$ such that with probability $1/P(n)$ satisfies $y^3=x \; \text{mod} \; N$. The probability of success is taken over the random draws of the two primes $p$, $q$ and the input $x\in \Z$ and any internal randomisation of $\A$.
\end{definition}

While other one-way functions like the discrete exponential (along with its inverse, the discrete logarithm) also have similar classical hardness assumptions, the discrete cube (root) function is additionally known to be a \emph{trap-door function}. That is, there exists a key $d\in\Z$ such that $g_N(x)$ can be alternatively computed via modular exponentiation as $g_N(x) = x^d \text{ mod }N$. This key $d$ can be efficiently computed from the prime factors $p$ and $q$ of $N$. We show this using a similar reasoning to that around \cref{eq:y3}. Call $\phi(N)=(p-1)(q-1)$. From $p=3k+2$ and $q=3k'+2$, we have $\text{gcd}(\phi(N),3)=1$. Therefore, Euclid's algorithm assures that there exists $d,d'\geq 1$ such that $3d=\phi(N)d'+1 = (p-1)(q-1)d'+1$. We want to show that, for all $y\in\Z$,
\begin{equation}
    y^{3d} \equiv y \text{ mod } N.
\end{equation}
To do this, we show that $y^{3d}-y$ is divisible by both $p$ and $q$. In the case of $p$, we have:
\begin{equation}
    y^{3d} \equiv y^{(p-1)(q-1)d'+1} \equiv y \text{ mod } p.
\end{equation}
The last equality follows again from Fermat's little theorem (see \cref{eq:y3}). Similarly, $y^{3d} \equiv y \text{ mod }q$, which implies that $y^{3d}-y$ is divisible by $p$ and $q$, and that $y^{3d} \equiv y \text{ mod } N$. Therefore $d$ is a valid key for computing $g_N(y)$ for all $y\in \Z$. When knowing the factors $p$ and $q$ of $N$, one can compute $\phi(N)$ and use Euclid's algorithm to find $d$ such that $3d = 1  \text{ mod }\phi(N)$. However, factoring a large $N$ is believed to be computationally intractable classically, which justifies the discrete cube root assumption (Def.~\ref{def:DCR-assumption}).

\subsubsection{The learning task}

Based on the discrete cube root assumption, we can construct a learning task that is not efficiently PAC learnable classically. In order to define the concept class of this learning task, we first consider the class of functions:
\begin{equation}
    \F_n = \{ g_N : \Z\rightarrow \Z\ |\ g_N(x) = \sqrt[3]{x}\; \text{mod}\; N, \text{ for }N\text{ an }n\text{-bit integer that satisfies the DCR conditions}\} 
\end{equation}
defined for any integer $n$. We define the concept class:
\begin{equation}\label{eq:DCR-concepts}
    \C_n = \{ g_{N,s} : \Z\rightarrow \{0,1\}\ |\ g_{N,s}(x) =
    \begin{cases}
    1,&\text{if }g_N(x)\in[s,s+\frac{N-1}{2}],\\ 0, &\text{otherwise.}
    \end{cases}
    , \text{ for }g_N\in\F_n,\ s\in\Z\} 
\end{equation}
for any integer $n$. In our proofs of classical hardness and quantum learnability, we also need that the integer $N$ corresponding to the target function should be specified with the training data. One way of doing so is to append it to all inputs $x \in \Z$ as $(x,N)$ and redefine the concept class accordingly. To ease notation, we assume this transformation to be done implicitly in the following.

Because the discrete cube function is a one-way function and is bijective on $\Z$, it is easy to classically generate training data for any concept $g_{N,s}\in\C_n$, under the uniform distribution over $\Z$. Indeed, one can simply uniformly sample $y\in\Z$, compute its corresponding $x = y^3 \text{ mod }N$ via modular exponentiation and keep from $y$ only the label indicating whether $y\in[s,s+\frac{N-1}{2}]$. The bijectivity of $f_N$ ensures that $x$ is generated uniformly over $\Z$.

In the PAC setting, a learning algorithm has to find, for every concept $g_{N,s}\in\C_n$, for every data distribution $\D$, and for all $\eps,\delta \in (0,1/2)$, a hypothesis function $h$ that, with probability $1-\delta$, satisfies $\Pr_{x \sim \D}[g_{N,s}(x) \neq h(x)]\leq \eps$ in time and number of samples both polynomial in $n$, $1/\eps$ and $1/\delta$.

Note that in Ref.~\cite{kearns94}, the authors consider instead the concept class
\begin{equation}
    \C'_n = \{ g_{N,i} : \Z\rightarrow \{0,1\}\ |\ g_{N,i}(x) = \text{bin}(i,g_N(x)), \text{ for }g_N\in\F_n,\ i\in\{1, \ldots, n\}\} 
\end{equation}
for any integer $n$, where bin$(i,y)$ denotes the $i$-th bit of $y$ in binary form. This concept class is also not efficiently PAC learnable, although this is only shown in a much weaker sense. The authors show that no classical algorithm can achieve an error $\eps=1/n^2$ with failure probability $\delta=1/n^2$ in $\O(\text{poly}(n))$ time, while, for the concept class $\C_n$ we consider, we can show that $\eps=1/2 - 1/\text{poly}(n)$ and $\delta=1/3$ would already break the DCR assumption.

\subsubsection{Classical hardness}

To show that classical learners cannot achieve significantly better than random guesses on the class $C_n$, we make use of a result from Alexi \emph{et al.}\cite{alexi88}. This result makes use of the notion of a $\eps(n)$-oracle:
\begin{definition}[$\eps(n)$-oracle] Let $O_{N,s}$ be a probabilistic oracle, such that $O_{N,s}(x)$ computes $g_{N,s}(x)$ correctly with probability $1/2+\eps(n)$ over the random choice of $x$ and the internal randomness of the oracle. We say that $O_{N,s}$ is an $\eps(n)$-oracle.
\end{definition}
Alexi \emph{et al.} show that a $1/\text{poly}(n)$-oracle is sufficient to break the DCR assumption.
\begin{lemma}[Corollary (a) to Theorem 1 in \cite{alexi88}]\label{lem:alexi}
    For any $g_{N,s}\in\C_n$, given a $1/\text{poly}(n)$-oracle $O_{N,s}$ to $g_{N,s}$, there exists a $\O(\text{poly}(n))$-time algorithm that uses $O_{N,s}$ to compute $g_{N}(x),\ \forall x \in \Z$ (with success probability, e.g., $9/10$).
\end{lemma}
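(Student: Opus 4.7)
The plan is to construct a polynomial-time reduction from the task of computing $g_N(x)$ (for arbitrary $x \in \Z$) to that of querying the weak predictor $O_{N,s}$. The central algebraic tool is the multiplicative self-reducibility of $g_N$: since $g_N(ab \bmod N) = g_N(a) g_N(b) \bmod N$, for any $r \in \Z$ coprime to $N$ we have $g_N(r^3 x \bmod N) = r \cdot g_N(x) \bmod N$. Consequently, for uniformly random $r$, the input $r^3 x \bmod N$ is uniformly distributed, so the oracle's guarantee applies: $O_{N,s}(r^3 x \bmod N)$ equals $1$ iff $r m \bmod N \in [s, s + (N-1)/2]$, correctly with probability $\tfrac{1}{2} + \eps$ over the choice of $r$, where $m := g_N(x)$ is the unknown we wish to recover and $\eps = 1/\text{poly}(n)$.

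Second, I would recover $m$ from this noisy ``interval membership'' oracle $B(r)$ by a binary-search procedure. The key observation is that, for fixed $m$, the pattern of answers $\{B(r)\}_r$ as $r$ ranges over a suitably chosen structured set of multipliers uniquely determines $m$: each query reveals one bit of information about where $rm \bmod N$ falls relative to the shifted half-interval $[s, s+(N-1)/2]$, and careful choices of $r$ allow one to halve the candidate range of $m$ at each step. With $O(n)$ such noise-free queries one can pin $m$ down, and correctness can be certified a posteriori by checking $m^3 \equiv x \pmod{N}$.

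The main obstacle is that $B(r)$ is correct only with probability $\tfrac{1}{2} + \eps$, and, crucially, this guarantee holds only for uniformly random $r$, whereas the binary search requires answers on structured (non-random) multipliers; naive repetition therefore does not amplify. The technique of Alexi \emph{et al.}\ circumvents this by coupling each structured query $B(r)$ with fresh re-randomizations: for random $t$ coprime to $N$, one queries the pair $B(t)$ and $B(rt \bmod N)$ and uses the algebraic relationship between their arguments to disentangle the desired structured answer from the random one. Majority voting over $\text{poly}(n/\eps)$ such independent re-randomizations drives the error per binary-search step below $1/\text{poly}(n)$ via Chernoff-type bounds, and a union bound over the $O(n)$ search steps yields the claimed $9/10$ success probability in polynomial total running time. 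The delicate point is proving that this re-randomization does not destroy the advantage $\eps$, which requires the careful pairwise-independence / Chebyshev-style analysis of Alexi \emph{et al.}
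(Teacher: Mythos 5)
The paper never proves this lemma at all: it is imported verbatim as Corollary (a) to Theorem 1 of Alexi \emph{et al.}, so there is no in-paper argument to compare against, and relying on that citation is exactly what the authors do. Your sketch is a faithful high-level reconstruction of how that external proof goes (multiplicative randomized self-reduction $g_N(r^3x \bmod N)=r\,g_N(x) \bmod N$, a noisy shifted-half-interval oracle, and pairwise-independent re-randomization to transfer the $1/2+1/\text{poly}(n)$ advantage onto structured queries), and, like the paper, it defers the genuinely delicate step --- amplifying the weak advantage on non-random queries without independent repetition --- to Alexi \emph{et al.}, which is precisely what the citation covers.
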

 
We make use of this result to show the following lemma:
\begin{lemma}\label{lem:harndess-DCR}
    Under the discrete cube root assumption, no $\O(\text{poly}(n))$-time classical learning algorithm can achieve an expected error
    \begin{equation}
        \Pr_{x\sim\mathcal{U}(\Z)} [h(x) \neq g_{N,s}(x)] \leq 1/2 - 1/\text{poly}(n)
    \end{equation}
    with probability $2/3$ over the random generation of its training data and its internal randomness, and this for every concept $g_{N,s}\in\C_n$. $\mathcal{U}(\Z)$ is the uniform distribution over $\Z$.
\end{lemma}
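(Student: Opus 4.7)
The plan is to prove the lemma by contrapositive: assuming that a classical learner $L$ with the stated guarantees exists, I would construct a $\O(\text{poly}(n))$-time classical algorithm $\mathcal{B}$ that breaks the discrete cube root assumption (Def.~\ref{def:DCR-assumption}). The crucial bridge is Lemma \ref{lem:alexi}, which promotes any $1/\text{poly}(n)$-oracle for a concept $g_{N,s}$ into an efficient inverter of the whole function $g_N$.

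Concretely, fix a polynomial $p(n)$ for which $L$ outputs, with probability $\geq 2/3$ over its randomness, a hypothesis $h$ satisfying $\Pr_{x\sim\mathcal{U}(\Z)}[h(x)\neq g_{N,s}(x)]\leq 1/2 - 1/p(n)$ for every $g_{N,s}\in\C_n$. On a DCR challenge $(N,x^\ast)$, with $N$ an $n$-bit product of primes of the required form and $x^\ast$ drawn uniformly from $\Z$, the algorithm $\mathcal{B}$ picks any fixed $s$ (e.g., $s=0$) so that $g_{N,s}\in\C_n$. Because $f_N(y)=y^3 \text{ mod } N$ is classically efficient and a bijection on $\Z$, $\mathcal{B}$ generates a training set of any polynomial size by sampling $y_i\sim\mathcal{U}(\Z)$, setting $x_i = y_i^3 \text{ mod } N$, and labelling with $\mathbb{1}[y_i\in[s,s+(N-1)/2]]$. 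Bijectivity guarantees that the $x_i$ are uniform over $\Z$, which matches the distribution under which $L$'s guarantee is stated.

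Next, $\mathcal{B}$ runs $L$ on this training set to obtain $h$. With probability $\geq 2/3$, the hypothesis $h$ agrees with $g_{N,s}$ on a $1/2+1/p(n)$ fraction of $\Z$, so $h$ is a deterministic $\eps(n)$-oracle with $\eps(n)=1/p(n)$ in the sense preceding Lemma \ref{lem:alexi}. Feeding $h$ into the algorithm of that lemma produces $g_N(x^\ast)=\sqrt[3]{x^\ast} \text{ mod } N$ in $\O(\text{poly}(n))$ time with probability at least $9/10$. The overall success probability of $\mathcal{B}$ is therefore at least $(2/3)\cdot(9/10)=3/5$, taken over the random choice of $N$, the random $x^\ast$, and $\mathcal{B}$'s internal randomness. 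Since $3/5 \gg 1/\text{poly}(n)$, this contradicts Def.~\ref{def:DCR-assumption}, yielding the claim.

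I do not expect a serious technical obstacle: the proof is essentially a chaining of two reductions (training-data generation via the one-way direction of $f_N$, and oracle-to-inverter via Alexi \emph{et al.}). The only points that deserve care are bookkeeping, specifically (i) noting that the learner's ``for every concept'' promise covers the particular $N$ appearing in the DCR challenge, so we need no average-case assumption on $L$, and (ii) verifying that the uniform distribution in input space induced by sampling $y\sim\mathcal{U}(\Z)$ and applying the cube map indeed matches the distribution under which $L$'s error bound holds. No probability amplification is required, since any inverse-polynomial success probability already contradicts the DCR assumption as formulated.
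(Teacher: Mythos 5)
Your proposal is correct and follows essentially the same route as the paper's own proof: generate training data classically via the one-way direction of $y \mapsto y^3 \bmod N$, run the assumed learner to obtain a $1/\text{poly}(n)$-oracle for $g_{N,s}$ with probability $2/3$, and invoke the result of Alexi \emph{et al.} (Lemma \ref{lem:alexi}) to invert the cube, giving overall success probability $0.6$ and contradicting the DCR assumption. The bookkeeping points you flag (uniformity of the generated inputs via bijectivity, and that the learner's guarantee covers the challenge $N$) are exactly the details the paper relies on as well.
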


\begin{proof}
    Suppose by contradiction that such a learning algorithm would exist for a certain concept $g_{N,s}\in\C_n$. Then, given $N$, one can use this learning algorithm to generate with probability $2/3$ a $1/\text{poly}(n)$-oracle $O_{N,s} = h$. This is possible since the generation of training data for the concept $g_{N,s}$ is classically efficient given $N$. Now, by applying Lemma \ref{lem:alexi}, one obtains a $\O(\text{poly}(n))$-time algorithm that uses this $O_{N,s}$ to compute $g_{N}(x),\ \forall x \in \Z$, with success probability $9/10$. The overall success probability of this procedure, taken over the random choice of $x,\ N$ and the learning algorithm is $0.6$, which contradicts the DCR assumption. 
\end{proof}

\subsection{A simple shadow model}
In this section we show how to construct a simple shadow model which can solve the same learning task for which we just showed classical hardness. This shadow model is obtained from the following flipped model:
\begin{equation}
\begin{gathered}
f_{\thv}(\x)  = \Tr[\rho(\thv)O(\x)]\\
\rho(\thv) = \ket{d',s'}\!\bra{d',s'}\ \&\ O(\x) = \sum_{d',s'} g_{N,s'}(\x) \ket{d',s'}\!\bra{d',s'}.
\end{gathered}
\end{equation}
That is, $\rho(\thv)$ consists of an $n$-qubit register which contains the candidate key $d'\in\Z$ and a second $n$-qubit register containing the candidate separating $s'\in\Z$ that is used to label whether $g_{N}(\x)=\x^d \text{ mod } N \in [s',s'+\frac{N-1}{2}]$.

\begin{lemma}\label{lem:learnability-DCR}
    The concept class $\C_n$ is efficiently PAC learnable under the uniform distribution $\mathcal{U}(\Z)$ using a shadow model.
\end{lemma}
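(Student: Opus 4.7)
The plan is to construct explicitly the advice $\omega(\thv) = (d, s')$ promised by the flipped model above the lemma, where $d$ is the trap-door key for $N$ and $s'$ is a data-chosen approximation of the true half-interval endpoint $s$, and then verify the three requirements of a PAC shadow learner: (i) the shadowing phase runs in time $\O(\text{poly}(n,1/\eps,1/\delta))$, (ii) the classical evaluator runs in time $\O(\text{poly}(n))$, and (iii) the expected error of the produced hypothesis under $\mathcal{U}(\Z)$ is at most $\eps$ with probability at least $1 - \delta$, using a training set whose size is polynomial in $n$, $1/\eps$, and $1/\delta$.

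First, given $N$, I would run Shor's factoring algorithm to recover its prime factors $p,q$ in $\O(\text{poly}(n))$ quantum time, then invoke the extended Euclidean algorithm to obtain the unique $d\in\Z$ satisfying $3d \equiv 1 \pmod{\phi(N)}$; by the argument in \cref{sec:DCR}, this $d$ is the trap-door key, so $g_N(x) = x^d \text{ mod } N$ for every $x \in \Z$. Next, given training data $\{(x_i, g_{N,s}(x_i))\}_{i=1}^m$ drawn i.i.d.~from $\mathcal{U}(\Z)$, I would classically compute $y_i = x_i^d \text{ mod } N$ by modular exponentiation, and then choose $s'$ via empirical risk minimization over the one-dimensional concept class of cyclic half-intervals $\{y \mapsto 1\text{ iff }y \in [s'',s''+\tfrac{N-1}{2}]\}_{s'' \in \Z}$, which is implementable in $\O(m\log m)$ time by sorting the $y_i$'s and scanning for the best threshold. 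The resulting state $\rho(\thv) = \ket{d,s'}\!\bra{d,s'}$ is a computational basis state on $2n$ qubits, so a single computational basis measurement recovers the advice $\omega(\thv) = (d,s')$ with certainty; the whole shadowing phase is therefore polynomial and uses $M=1$ circuit on $m=2n$ qubits, well within the budget of \cref{def:shadowfiable}.

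The classical evaluator then reads $(d,s')$ from $\omega(\thv)$, computes $y = \x^d \text{ mod } N$ by modular exponentiation, and outputs $1$ if $y \in [s', s'+\tfrac{N-1}{2}]$ and $0$ otherwise. This is precisely $\Tr[\rho(\thv) O(\x)]$ for the flipped model displayed above the lemma, since the projector $\rho(\thv) = \ket{d,s'}\!\bra{d,s'}$ selects the $(d,s')$-th diagonal entry of $O(\x)$, namely $\widehat{g}_{d,s'}(\x)$; hence the classical hypothesis really is the shadow of the flipped model.

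For the generalization bound, because $g_N$ is a bijection on $\Z$, the substitution $y = g_N(x)$ preserves $\mathcal{U}(\Z)$, and the expected error of the shadow hypothesis equals the $\mathcal{U}(\Z)$-measure of the symmetric difference between $[s,s+\tfrac{N-1}{2}]$ and $[s',s'+\tfrac{N-1}{2}]$, which is at most $2|s-s'|_{\mathrm{cyc}}/N$. The class of cyclic half-intervals on $\Z$ has constant VC dimension, so standard PAC bounds yield that the empirical risk minimizer on $m \in \O(\tfrac{1}{\eps}\log\tfrac{1}{\delta})$ samples achieves expected error at most $\eps$ with probability at least $1-\delta$. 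The only substantive step is this reduction to threshold learning in feature space; once the trap-door key $d$ is in hand via Shor's algorithm, everything else is routine, and I expect the main subtlety is just to emphasize that the shadowing phase requires no randomized-measurement machinery at all --- a single computational basis measurement of $\ket{d,s'}$ suffices to obtain the classical advice, which is what makes the classical-vs-shadow separation with the previous hardness result so clean.
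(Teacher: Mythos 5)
Your construction is correct and follows the same overall route as the paper's proof: Shor's algorithm plus Euclid to obtain the trap-door key $d$, the advice $\omega(\thv)=(d,s')$ extracted by a single computational basis measurement of the basis state $\rho(\thv)$, and a classical evaluator that computes $\x^d \bmod N$ and tests membership in $[s',s'+\frac{N-1}{2}]$, which is exactly the flipped model's value. The one place where you genuinely diverge is the selection and analysis of $s'$. The paper restricts the candidate set to the feature-space images $x^d \bmod N$ of the training points and argues directly, using uniformity of the data over $\Z$, that with $\abs{X}\geq \log_{1-2\eps}(\delta)$ samples some candidate lies within $\eps N$ of $s$, which bounds the expected error by $2\eps$; this is elementary and tailored to $\mathcal{U}(\Z)$. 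You instead run empirical risk minimization over the full class of cyclic half-intervals and invoke a constant-VC-dimension PAC bound, which buys a cleaner, distribution-free generalization guarantee and arguably ties the ERM output to the error bound more tightly than the paper's argmin-over-candidates step does. Two small points to tidy up: the realizable VC bound gives $m\in\O\big(\frac{1}{\eps}(\log\frac{1}{\eps}+\log\frac{1}{\delta})\big)$ rather than $\O(\frac{1}{\eps}\log\frac{1}{\delta})$, and your evaluator needs $N$ at deployment time, which the paper handles by implicitly appending $N$ to every input; neither affects correctness.
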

\begin{proof}
    We first describe how $\rho(\thv)$ can be computed efficiently quantumly. Using the specification of $N$ provided by the training data\footnote{Strictly speaking, the PAC framework does not allow one to provide $N$ explicitly in the training data. One way around this issue is to redefine the concepts to be of the form $\tilde{g}_{N,s} : \{0,1\}\times\Z\rightarrow \{0,1\}\ |\ \tilde{g}_{N,s}(0,x)=g_{N,s}(x)$ and $\tilde{g}_{N,s}(1,x)=$ ``$j$-th bit of $N$, for $j$ encoded in the first $\log(N)$ bits of $x$''. This way, we can efficiently recover $N$ from the uniform distribution $\mathcal{U}(\{0,1\}\times\Z)$, while only marginally impacting the training data.}, one can use Shor's algorithm to compute the factors $p,q$ of $N$ with arbitrarily high probability of success. This in turn allows to compute $\phi(N)=(p-1)(q-1)$ and the key $d'= d$ that satisfies $3d = 1  \text{ mod }\phi(N)$ using Euclid's algorithm. As for the candidate separating $s'$, it can be encoded using Pauli-X gates. The observable $O(\x)$ can also be evaluated efficiently from computational basis measurements. For an outcome $(d',s')$, one simply computes $g_{N}(\x)=\x^d \text{ mod } N$ via modular exponentiation and checks whether the output is in $[s',s'+\frac{N-1}{2}]$.\\
    This model is naturally specified as a shadow model. One preparation of $\rho(\thv)$ followed by a computational basis measurement results in the classical advice $\omega(\thv)=(d',s')$. The classical evaluation $\A(\x,\omega(\thv))$ of a new data point $\x$ is done as explained in the last paragraph.\\
    The only remaining learning aspect is to identify an $s'$ close to $s$ from a training set $\{(x,y_i=g_{N,s}(x))\}_{x\sim\mathcal{U}(\Z)}$. We show that a training set $X$ of size $\abs{X} \geq \frac{\log(\delta)}{\log(1-2\eps)}$ is guaranteed to contain an $x^*$ such that, for $s'=g_{N}(x^*)$, $\abs{s-s'}\leq\eps N$ with probability $1-\delta$. We take this $x^*$ to be $x^*=\text{argmin}_{x\in X} \mathcal{L}(x^d \text{ mod } N)$, for $\mathcal{L}(y)= \sum_{x \in X} \abs{g_{N,y}(x) - g_{N,s}(x)}$ the training loss on the training set $X$. We show this by proving:
    \begin{equation}
       \text{Pr}\left(\abs{s'-s} \geq \eps N\right) \leq \delta. 
    \end{equation}
    This probability is precisely the probability that no $g_{N}(x) \in \{g_{N}(x)\}_{x\in X}$ is within $\eps$ distance of $s$, i.e.,
    \begin{equation}
       \text{Pr}\left(\bigcap_{x\in X}g_{N}(x) \notin [s-\eps N, s + \eps N]\right).
    \end{equation}
    As the elements of the training set are all identically distributed, we have that this probability is equal to
    \begin{equation}
    	\text{Pr}\left(g_{N}(x) \notin [s-\eps N, s + \eps N]\right) ^{\abs{X}}.
    \end{equation}
    Since all the datapoints are uniformly sampled from $\Z$, the probability that a datapoint is in any region of size $2\eps N$ is just $2\eps$. With the assumption that $\abs{X} \geq \log_{1-2\eps}(\delta)$ (and assuming $\eps<1/2$), we get:
    \begin{equation}
        \text{Pr}\left(\abs{s'-s} \geq \eps N\right) \leq (1-2\eps)^{\log_{1-2\eps}(\delta/2)} = \delta.
    \end{equation}
    From here, we simply notice that $\abs{s'-s}\leq\eps N$ guarantees an expected error
    \begin{equation}
        \Pr_{x\sim\mathcal{U}(\Z)} [g_{N,s'}(x) \neq g_{N,s}(x)] \leq 2\eps.
    \end{equation}
\end{proof}

\section{Relations between shadow models}

\subsection{Flipped models are universal}\label{appdx:flipped-universal}
In this section we show that any shadowfiable model as defined in Defs.~\ref{def:shadow-model} and \ref{def:shadowfiable} can be approximated by an efficiently shadowfiable flipped model. This result corresponds to Lemma \ref{lem:all-shadowfiable-flipped} in the main text, and more formally in the following lemma. 

\begin{lemma}\label{lem:all-shadowfiable-flipped-extended}
Let $f_{\thv}$ be a shadowfiable model acting on $n$ qubits as defined in Def.~\ref{def:shadowfiable} and let $\eps,\delta > 0$. There exists a flipped model $g_{\thv}(\x) = \Tr[\rho(\thv)O(\x)]$ acting on $\O(\text{poly}(n))$ qubits and evaluatable in $\O(\text{poly}(n,1/\eps))$ time such that
\begin{equation}
	\max_{\x\in\X}\abs{f_{\thv}(\x) - g_{\thv}(\x)} \leq \eps.
\end{equation}
Moreover, this flipped model is also shadowfiable for the error parameter $\eps$ and the success probability $1-\delta$. More precisely, a computational basis measurement of $\rho(\thv)$ yields an advice $\omega(\thv)$ such that with probability $1-\delta$ over the randomness of this measurement, we have
\begin{equation}
	\max_{\x\in\X}\abs{f_{\thv}(\x) - \widetilde{g}_{\thv}(\x)} \leq \eps,
\end{equation}
where $\widetilde{g}_{\thv}(\x) = \A(\x,\omega(\thv))$ is a classical $\O(\text{poly}(n,1/\eps,1/\delta,d))$-time algorithm that processes the advice $\omega(\thv)$ along with an input $\x\in\R^d$.
\end{lemma}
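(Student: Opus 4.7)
The plan is to build the flipped model by executing the entire shadowing-plus-evaluation pipeline coherently: pack all quantum-generated advice into a single pure state $\rho(\thv)$, and delegate the classical computation to a data-dependent observable $O(\x)$ via reversible simulation. First, I would replace the intermediate measurements in the shadowing phase with a principle-of-deferred-measurement purification. Concretely, I would allocate $M$ registers of $m$ qubits each, apply $W_1(\thv)$ to the first, and for $i = 2,\dots,M$ apply $W_i(\thv)$ to the $i$-th register with its classical controls on previous outcomes replaced by coherent controls on the earlier registers. This yields a state $\ket{\psi(\thv)} = W(\thv)\ket{0}^{\otimes Mm}$ whose computational basis statistics on the first $Mm$ qubits reproduce the original advice distribution $p(\omega\mid\thv)$. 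I would then set $\rho(\thv) = \ket{\psi(\thv)}\!\bra{\psi(\thv)} \otimes \ket{0}\!\bra{0}^{\otimes K}$, reserving $K$ ancilla qubits for the workspace of the classical algorithm.

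Second, I would turn the classical evaluation $\A$ into the data-dependent observable. Since $\A$ is a polynomial-time classical algorithm with binary input $(\x,\omega)$ and a real output representable by $K$ bits to machine precision, Bennett's reversible-simulation construction yields a polynomial-size unitary $U_{\A}(\x)$ mapping $\ket{\omega,0}$ to $\ket{\omega,\mathrm{bin}(\A(\x,\omega)),\mathrm{garbage}}$. Setting
\begin{equation}
O(\x) = U_{\A}^\dagger(\x)\Bigl(I \otimes \sum_{k=1}^{K} 2^{-k}\ket{1}\!\bra{1}_k\Bigr)U_{\A}(\x),
\end{equation}
a direct computation yields $g_{\thv}(\x) = \Tr[\rho(\thv)O(\x)] = \sum_\omega p(\omega\mid\thv)\,\A(\x,\omega) = \E_{\omega}[\A(\x,\omega)]$. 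The spectral norm of $O(\x)$ is bounded by the output range of $\A$, which I would truncate to the natural interval $[-B,B]$ of the target model without loss of generality.

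Third, I would establish both the approximation to $f_{\thv}$ and the shadowfiability of $g_{\thv}$. By the shadowfiability hypothesis on $f_{\thv}$, with probability at least $1-\delta$ over the draw of $\omega$ the original shadow model $\A(\x,\omega)$ is within $\eps$ of $f_{\thv}(\x)$ uniformly in $\x$; splitting the expectation over good and bad events yields $|g_{\thv}(\x) - f_{\thv}(\x)| \leq \eps + 2B\delta$ uniformly in $\x$, which can be absorbed by a polynomial rescaling of the input error parameters. For the shadowfiability of $g_{\thv}$, the construction is essentially self-dual: a single computational-basis measurement of the first $Mm$ qubits of $\rho(\thv)$ samples an $\omega$ from exactly $p(\cdot\mid\thv)$, and then classically running $\A(\x,\omega)$ reproduces the output of the original shadow model and inherits its $(\eps,\delta)$ guarantee. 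The main subtle point I foresee is precisely the first inequality: the flipped model value $g_{\thv}(\x)$ is a full expectation while the original shadow model is a single sample, so to obtain the deterministic uniform bound in the lemma statement one must control the contribution of the $\delta$-measure failure event, which is what forces the boundedness/truncation of $\A$ and the polynomial rescaling between the input and output error parameters.
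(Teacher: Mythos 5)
Your proposal is correct and follows essentially the same route as the paper's proof: a coherent (deferred-measurement) implementation of the adaptive shadowing circuits producing $\ket{\omega(\thv)}$, a reversible simulation of $\A$ absorbed into the data-dependent observable $O(\x)$ so that $g_{\thv}(\x)=\E_{\omega}[\A(\x,\omega)]$, a good/bad-event split of this expectation bounded via the boundedness of the shadow model's output, and re-invoking shadowfiability with rescaled parameters (the paper takes $\eps'=\eps/2$ and $\delta'$ small enough) to absorb the failure-event contribution. The shadowfiability of the flipped model via a single computational-basis measurement followed by classically running $\A$ is likewise exactly the paper's argument.
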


\begin{figure}[t]
	\centering
	\includegraphics[width=0.65\linewidth]{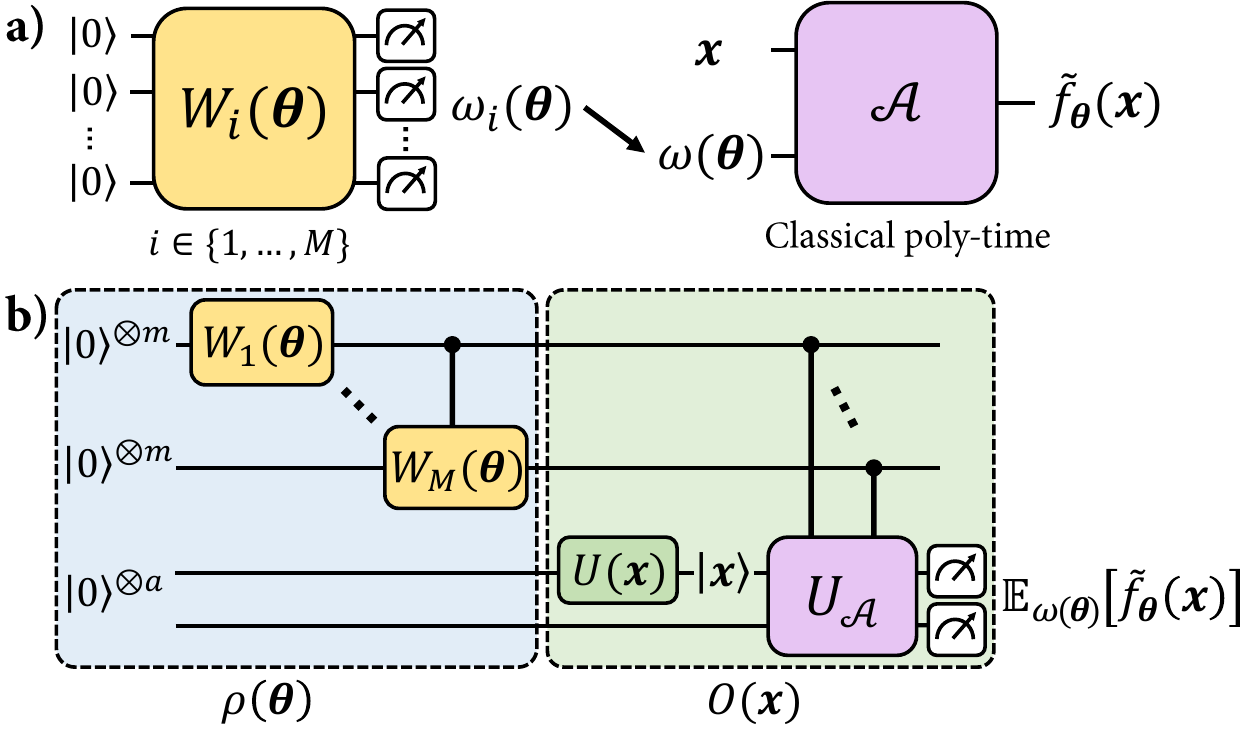}
	\caption{\textbf{All shadow models can be expressed as shadowfiable flipped models}. a) A shadow model consists of $M$ unitary circuits $W_i(\thv)$ that can be chosen adaptively, and that generate advice $\omega_i(\theta)$ from computational basis measurements of the states $W_i(\thv)\ket{0}^{m}$. This advice, along with a (binary description of) an input $\x\in\R^d$ are processed by a classical algorithm $\A$ to compute an approximation $\widetilde{f}_{\thv}$ of the shadowfiable model $f_{\thv}$. b) A coherent implementation of this shadow model, where the unitaries $W_i(\thv)$ are applied on different $m$-qubit registers, and coherently controlled by previous registers (for adaptivity). These $M$ registers constitute the coherent encoding of the advice $\ket{\omega(\thv)}$. The algorithm $\A$ can then be simulated by a reversible quantum computation $U_{\A}$ (see Sec. 3.2.5.~in \cite{nielsen00}) that processes a binary encoding $\ket{\x}$ of $\x$ and the coherent advice $\ket{\omega(\thv)}$ (either directly or indirectly via controlled operations that imprint $\ket{\omega(\thv)}$ on an auxiliary register). This coherent implementation of the shadow model can be viewed as a shadowfiable flipped model $g_{\thv}(\x)=\Tr[\rho(\thv)O(\x)]$, such that one evaluation of this model samples an advice $\omega(\thv)$ and evaluates $\A(\x,\omega(\thv))$ for that advice and a given \x.}
	\label{fig:shadow-flipped}
\end{figure}

\begin{proof}
    By Def.~\ref{def:shadowfiable}, the shadowfiable model $f_{\thv}$ admits for every error of approximation $\eps'>0$ and probability of failure $\delta'>0$ a shadow model $\widetilde{f}_{\thv}$ that uses $m\cdot M\in\O(\text{poly}(n,1/\eps',1/\delta'))$ qubits and guarantees
    \begin{equation}
	   \max_{\x\in\X}\abs{f_{\thv}(\x) - \widetilde{f}_{\thv}(\x)} \leq \eps'
    \end{equation}
    with probability $1-\delta'$ over the generation of its advice. Out of this shadow model, we use the construction described in Fig.~\ref{fig:shadow-flipped}.b) to define the flipped model $g_{\thv}(\x)$. Since this flipped model corresponds to the evaluation of the shadow model $\widetilde{f}_{\thv}(\x)$ averaged over the randomness of $\omega(\thv)$, we have, for all $\x\in\R^d$:
    \begin{align}
        \abs{f_{\thv}(\x) - g_{\thv}(\x)} &\leq \abs{(1-\delta')\eps' + \delta'\norm{\widetilde{f}_{\thv}}}\\
        &\leq \eps' + \delta' \left(\norm{\widetilde{f}_{\thv}}+\eps'\right)
    \end{align}
    where we use that with probability $1-\delta'$ we have $\abs{f_{\thv}(\x) - \widetilde{f}_{\thv}(\x)}\leq\eps'$ and otherwise assume the worse case error $\abs{f_{\thv}(\x) - \widetilde{f}_{\thv}(\x)}\leq 2\norm{\widetilde{f}_{\thv}}$, for $\norm{\widetilde{f}_{\thv}}=\max_{\x\in\X}\abs{\widetilde{f}_{\thv}(\x)}$ (which can also be capped to $\max_{\x\in\X}\abs{f_{\thv}(\x)}$ without loss of generality). Therefore, by setting $\eps'=\frac{\eps}{2}$ and $\delta'=\min\big\{\frac{\eps}{2\left(\norm{\widetilde{f}_{\thv}}+\eps'\right)},\delta\big\}$ (important for the second part of the proof), we get
    \begin{equation}
        \max_{\x\in\X}\abs{f_{\thv}(\x) - g_{\thv}(\x)} \leq \eps.
    \end{equation}
    This proves the first part of the lemma. From here, it is straightforward to notice that a measurement of $\rho(\thv) = \ket{\omega(\thv)}\!\bra{\omega(\thv)} \otimes \ket{0}\!\bra{0}^{\otimes a}$ in the computational basis yields an advice $\omega(\thv)$ such that the algorithm $\A$ associated to the shadow model $\widetilde{f}_{\thv}$ satisfies
    \begin{equation}
	\max_{\x\in\X}\abs{f_{\thv}(\x) - \A(\x,\omega(\thv))} \leq \eps' < \eps
    \end{equation}
    with probability at least $1-\delta' \geq 1-\delta$ over the randomness of measuring the advice.
\end{proof}

\subsection{\textsf{BQP} and \textsf{P/poly}}\label{appdx:BQP-P/poly}
In this section we give a rigorous proof that there exist quantum models that are not shadowfiable, under the assumption that \textsf{BQP} $\not\subset$ \textsf{P/poly}. The approach we take is similar to that of Huang \emph{et al.}~\cite{huang20} (Appendix A), although we consider different complexity classes and therefore show different results. Let us start by noting that the shadow models defined in Def.~\ref{def:shadow-model} compute functions in a subclass of \textsf{P/poly}, namely the subclass in which the advice $\omega(\thv)$ is efficiently generated from the measurements of a polynomial number of quantum circuits. We call this complexity class \textsf{BPP/qgenpoly} and it is obvious that \textsf{BPP/qgenpoly} $\subseteq$ \textsf{P/poly} as the latter is equal to $\textsf{BPP/poly}$ \cite{adleman78} and contains all classically efficiently computable functions with advice of polynomial length, without any constraints on how the advice is generated.

\begin{figure}[t]
	\centering
	\includegraphics[width=0.4\linewidth]{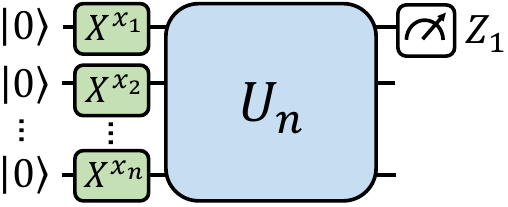}
	\caption{\textbf{A universal quantum model for \textsf{BQP}}. For an $n$-dimensional input $\x\in\{0,1\}^n$, this model acts on $n$ qubits, encodes $\x$ in its binary form $\ket{\x}$ and applies a $\text{poly}(n)$-time unitary $U_n$ before a Pauli-$Z$ measurement of the first qubit. For appropriately chosen unitaries $\{U_n: n\in \mathbb{N}\}$, this model can decide any language in \textsf{BQP}. For more general computational basis measurements, the resulting model can represent arbitrary functions in \textsf{FBQP}, the functional version of \textsf{BQP}.}
	\label{fig:universal-model}
\end{figure}

On the other hand, we know that quantum models can compute all functions in \textsf{BQP}. To see this, we first refer the reader to the definition of \textsf{BQP} in Def.~\ref{def:BQP}. It is easy to show that for any language $L$ in \textsf{BQP}, there exists a quantum model which can decide this language. Consider the following quantum model $f_n=\text{Tr}[\rho(x)O_n]$, depicted in Fig.~\ref{fig:universal-model}:
\begin{equation}
\begin{gathered}
\rho(x) = \bigotimes_{i=1}^n X_i^{x_i}\ket{0}\!\bra{0}X_i^{x_i}\;\; \&\ \;\;O_n = U_n^\dagger Z_1 U_n
\end{gathered}
\end{equation}
Where $X_i$ is the Pauli-X gate acting on the $i$-th qubit, here parametrized by $x_i\in\{0,1\}$, the $i$-th bit of $x$, $Z_1$ is the Pauli observable on the first qubit, and $U_n$ is the quantum circuit used to decide the language $L$ in Def.~\ref{def:BQP}. Then:
\begin{enumerate}
    \item For all $x \in L$, $f_n(x)=$ Pr[the output of $U_{|x|}$ applied on the input $x$ is $1$] - Pr[the output of $U_{|x|}$ applied to the input $x$ is $0$] $\geq 2/3-1/3=1/3$.
    \item For all $x \notin L$, $f_n(x)=$ Pr[the output of $U_{|x|}$ applied on the input $x$ is $1$] - Pr[the output of $U_{|x|}$ applied to the input $x$ is $0$] $\leq 1/3-2/3=-1/3$.
\end{enumerate}
Therefore, as $f_n(x)>0$ if $x \in L$ and $f_n(x)<0$ if $x\notin L$ such quantum model could efficiently decide the language.
We show that if all such quantum models would be shadowfiable then \textsf{BQP} $\subseteq$ \textsf{P/poly}.
\begin{lemma}\label{thm:not-all-shadowfiable-extended}
    If all quantum models $f_{\thv}$ are shadowfiable with the guarantee that, $\forall x \in \X$,
    \begin{equation}
        \abs{f_{\thv}(x)-\widetilde{f}_{\thv}(x)} < 0.15,
    \end{equation}
    with probability at least $2/3$ over the shadowing phase and the randomness of evaluating the shadow model $\widetilde{f}_{\thv}$, then \textsf{BQP} $\subseteq$ \textsf{P/poly}.
\end{lemma}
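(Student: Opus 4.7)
The plan is to show that every language $L\in\textsf{BQP}$ lies in $\textsf{P/poly}$, by applying the shadowfiability hypothesis to the explicit universal family of quantum models that the preceding paragraphs have introduced.

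First, I would fix an arbitrary $L\in\textsf{BQP}$ and recall the associated quantum linear model
\begin{equation}
f_n(x) = \Tr[\rho(x)\,U_n^\dagger Z_1 U_n], \qquad \rho(x) = \bigotimes_{i=1}^n X_i^{x_i}\ket{0}\!\bra{0}X_i^{x_i},
\end{equation}
where $\{U_n\}$ is the uniform family of polynomial-size circuits deciding $L$. As already noted in the excerpt, this model has decision margin at least $1/3$: $f_n(x)\geq 1/3$ when $x\in L$ and $f_n(x)\leq -1/3$ otherwise.

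Next, I would apply the shadowfiability hypothesis to $f_n$ to obtain, for each $n$, a shadow model $\widetilde{f}_n(x)=\A(x,\omega_n)$ whose advice $\omega_n$ is generated by a polynomial number of polynomial-time quantum circuits, and which satisfies $\abs{f_n(x)-\widetilde{f}_n(x)}<0.15$ for every $x$ with probability at least $2/3$ over both the advice generation and the internal randomness of $\A$. Since $0.15<1/3$, any successful evaluation preserves the sign of $f_n(x)$, so the sign of $\widetilde{f}_n(x)$ decides $L$ on $x$ correctly with probability at least $2/3$.

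At this point I would invoke the standard Adleman-style derandomization to convert a per-input randomized decider into non-uniform advice that works for all inputs of length $n$ simultaneously. Concretely, I would amplify the success probability above $1-2^{-n-1}$ by running $\O(n)$ independent copies of the shadowing-plus-evaluation pipeline and taking a majority vote (Chernoff bound), then apply a union bound over the $2^n$ inputs of length $n$. This gives that with probability strictly greater than $1/2$, over the combined randomness of all advice generations and all classical coin flips, the amplified algorithm correctly decides $L$ on \emph{every} input of length $n$. Hence some specific bit-string of polynomial length realises the correct decision uniformly on all such inputs; hardwiring it as non-uniform advice $a_n$ yields a deterministic polynomial-time classical algorithm with polynomial advice that decides $L$, placing $L$ in $\textsf{P/poly}$. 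Since $L\in\textsf{BQP}$ was arbitrary, $\textsf{BQP}\subseteq\textsf{P/poly}$ follows.

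The only load-bearing step is the last one: bridging the \emph{per-input} probabilistic guarantee supplied by shadowfiability and the \emph{all-inputs-at-once} guarantee demanded by $\textsf{P/poly}$. Once amplification and a union bound are in place, the argument essentially recovers Adleman's proof that $\textsf{BPP/poly}=\textsf{P/poly}$, which also explains why the quantum provenance of the advice is irrelevant for membership in $\textsf{P/poly}$: we only need the existence of a good advice bit-string, not an efficient way to produce one at deployment.
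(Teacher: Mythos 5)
Your proposal is correct and follows essentially the same route as the paper's proof: the universal \textsf{BQP} linear model with decision margin $1/3$, the observation that an approximation error below $0.15$ preserves the sign, and an Adleman-style derandomization (Chernoff amplification over $\O(n)$ independent advice/coin samples plus a union bound over the $2^n$ inputs) to fix a single polynomial-size advice string, yielding \textsf{P/poly}. The only cosmetic difference is the bookkeeping of the amplified failure probability (you target $2^{-n-1}$ per input for a $>1/2$ overall success, the paper uses $18n$ repetitions for $e^{-n}$ and only needs the probability to be positive), which does not change the argument.
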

\begin{proof}
    Consider a language $L$ in \textsf{BQP}. We showed that there exists a quantum model $f_{\thv}=f_n$ which can determine, for every $x\in \{0,1\}^n$, whether $x$ is in $L$. Now, by our assumption, there exists a shadow model $\widetilde{f}_n$ such that, for every $x\in \{0,1\}^n$, $|f_n(x)-\widetilde{f}_n(x)|<0.15$ with probability greater than $2/3$ over the randomness of sampling the advice $\omega(\thv)$ and the (potential) internal randomness of its classical algorithm $\A$. To get rid of these two sources of randomness, we make use of the same proof strategy as for Adleman’s theorem \cite{adleman78}:
    Consider now a new algorithm $\mathcal{A'}$ which runs $\mathcal{A}$ $18n$ times, each time using a new sampled advice string $\omega(\thv)$ and a new random bit-string for its internal randomness. Then we take a majority vote from the $18n$ runs. By Chernoff bound, the probability that for any $x\in \{0,1\}^n$ the algorithm $\mathcal{A'}$ fails to determine if $x$ belongs to $L$ is at most $1/e^n$. Then, by union bound, the probability that $\mathcal{A'}$ decides all the $x\in \{0,1\}^n$ correctly is at least $1-2^n/e^n > 0$. This implies that there exists a particular choice of the $18n$ strings $\omega(\thv)$ and $18n$ random bit-strings used in each run of the algorithm $\mathcal{A}$ such that $\mathcal{A'}$ is correct for all $x$. The algorithm $\mathcal{A'}$, along with this particular choice of $18n$ strings as advice (which is of size polynomial in $n$) is our \textsf{P/poly} algorithm. Note that it guarantees $\forall \x\in \{0,1\}^n$, $\widetilde{f}_n(x)>0$ if $x\in L$ and $\widetilde{f}_n(x)<0$ if $x\notin L$. Therefore we can use the sign of $\widetilde{f}_n(x)$ to determine whether $x\in L$. This implies \textsf{BQP} $\subseteq$ \textsf{P/poly}.
\end{proof}

\subsection{Shadow models beyond Fourier}\label{appdx:beyond-Fourier}

\subsubsection{Generalization to the full domain \texorpdfstring{$\x \in \R^n$}{x in R^n}}

In \cref{sec:beyond-Fourier}, we showed how the model:
\begin{equation}\label{eq:grover-model-appdx}
\begin{gathered}
f_{\y}(\x)  = \Tr[\rho(\x)O(\y)]\\
\rho(\x) = \bigotimes_{i=1}^n R_Y(x_i)\ket{0}\!\bra{0}R_Y^\dagger(x_i)\ \&\ O(\y) = \ket{\y}\!\bra{\y}.
\end{gathered}
\end{equation}
for $\y \in \{0,1\}^{\otimes n}$ is not efficiently shadowfiable when we restrict the domain of $\x$ to be $\{0,\pi\}^n$ as, on this domain, $f_{\y}(\x) = \delta_{\x/\pi,\y}$ plays the role of a database search oracle. We can extend this intractability result to the full domain $\x\in\R^n$ by that noting that more general encoding states (even more general than those in \cref{eq:grover-model-appdx}) take the form
\begin{equation}\label{eq:gen-encoding}
    \rho(\x) = \ket{\psi(\x)}\!\bra{\psi(\x)},\quad \text{for} \quad \ket{\psi(\x)} = \sum_{\bm{j}=1}^{2^n} \alpha_{\bm{j}}(\x) \ket{\bm{j}}
\end{equation}
where $\alpha_{\bm{j}}(\x)\in\mathbb{C}$ is an arbitrary amplitude associated to a computational basis state $\ket{\bm{j}}$.

Now note that when we evaluate the quantum model $f_{\y}(\x)$ on a quantum computer, we do not have direct access to the expectation values $\Tr[\rho(\x)O(\y)]$ it corresponds to. We rather sample an eigenvalue of $O(\y)$ according to the Born rule applied to $\rho(\x)$. Therefore, a single evaluation of $f_{\y}(\x)$ for the encoding $\rho(\x)$ defined in \cref{eq:gen-encoding} returns $1$ with probability $\abs{\alpha_{\y}(\x)}^2$ and $0$ otherwise.

Let us call $U_{\y}$ the Grover operator associated to the database search oracle that marks $\y$, i.e.,
\begin{equation}
    U_{\y} : \ket{\bm{j}}\ket{0} \mapsto \ket{\bm{j}}\ket{\delta_{\bm{j},\y}}.
\end{equation}
and apply it on the state $\ket{\psi(\x)}\ket{0}$:
\begin{equation}
    U_{\y}\ket{\psi(\x)}\ket{0} = \sum_{\bm{j}=1}^{2^n} \alpha_{\bm{j}}(\x) \ket{\bm{j}}\ket{\delta_{\bm{j},\y}}.
\end{equation}
One can then notice that measuring the second register also yields $1$ with probability $\abs{\alpha_{\bm{j}}(\x)}^2$ and $0$ otherwise. Therefore, a single evaluation of $f_{\y}(\x)$ is as powerful as querying the Grover operator oracle in superposition, which still suffers from the same query complexity lower bound as querying it classically.

\subsubsection{Fourier decomposition}

In this subsection, we derive the Fourier-series decomposition of the model in Eq.~(\ref{eq:grover-model-appdx}), i.e., the frequency spectrum $\Omega$ and the Fourier coefficients $c_{\bm{\omega}}$ in the expression:
\begin{equation}
    f_{\y}(\x) = \sum_{\bm{\omega} \in \Omega} c_{\bm{\omega}}(\y) e^{-i\bm{\omega}\cdot\x}.
\end{equation} 
We start by noting that the model has a product structure, in which each $(x_i,y_i)$ pair contributes similarly. Notably, the overlap between the i-th qubit in the state $R_Y(x_i)\ket{0}=\cos(\frac{x_i}{2})\ket{0} - \sin(\frac{x_i}{2})\ket{1}$ and the state $\ket{y_i}$ is: 
\begin{equation}
	\abs{\left(\cos(\frac{x_i}{2})\bra{0} - \sin(\frac{x_i}{2})\bra{1}\right) \ket{y_i}}^2 = \cos^2\left(\frac{x_i + \pi y_i}{2}\right).
\end{equation}
From the Euler decomposition of $\cos(x) = \frac{e^{ix}+e^{-ix}}{2}$, we get:
\begin{equation}
	\cos^2\left(\frac{x_i + \pi y_i}{2}\right) = \frac{2 + e^{i(x_i + \pi y_i)} + e^{-i(x_i + \pi y_i)}}{4},
\end{equation}
such that
\begin{align}
	 f_{\y}(\x) &= \prod_{i=1}^{n} \frac{2 + e^{i(x_i + \pi y_i)} + e^{-i(x_i + \pi y_i)}}{4}\\
	 &= \sum_{\bm{\omega} \in \{-1,0,1\}^n} \frac{1}{2^{n+\abs{\bm{\omega}}}}e^{i (\x+\pi\y) \cdot \bm{\omega}}\\
	 &= \sum_{\bm{\omega} \in \{-1,0,1\}^n} \frac{e^{i\pi \y \cdot \bm{\omega}}}{2^{n+\abs{\bm{\omega}}}}e^{i \x \cdot \bm{\omega}}
\end{align}
where $\abs{\bm{\omega}} = \sum_{i=1}^n \abs{\omega_i}$. We can therefore identify $\Omega = \{-1,0,1\}^n$ and $c_{\bm{\omega}}(\y)= \frac{e^{i\pi \y \cdot \bm{\omega}}}{2^{n+\abs{\bm{\omega}}}}$ and note that the frequency spectrum of this model is exponentially large in $n$, with all its coefficients being non-zero, exponentially small in $n$, and differing only by a phase depending on $\y$. This justifies the exponential sample complexity needed to identify $\y$.

\subsubsection{Obfuscation of \texorpdfstring{$\y$}{y}}

At this point, the interested reader might also point out that the obfuscation of the marked basis state $\ket{\y}\!\bra{\y}$ only occurs somehow artificially by considering the observable $O(\y)$ as a black-box. That is, we consider as a black-box not only the input-independent gates in the circuit but also the mapping from computational basis state to real values when measuring the output state of the circuit. More naturally, when measuring a basis state $\ket{\bm{j}}$, one would have a computable function that returns its corresponding eigenvalue, which in this case could reveal $\y$. An easy fix for this is to include the encoding of $\y$ in the circuit, by redefining $O(\y)$ as 
\begin{equation}
O(\y) = \bigotimes_{i=1}^n X^{y_i}\ket{0}\!\bra{0}X^{y_i},
\end{equation}
which delegates the obfuscation of $\y$ to gates in the circuit.

One can also go a step further in this obfuscation by considering instead the following observables
\begin{equation}
O(\y) = V_{\text{DLP}}\ket{\y}\!\bra{\y}V^\dagger_{\text{DLP}}
\end{equation}
where $V_{\text{DLP}}$ is the unitary\footnote{Note that this is indeed a unitary transformation, but that potentially needs auxiliary qubits to be implemented unitarily on a quantum computer.} that maps a basis state to its discrete logarithm:
\begin{equation}
	V_{\text{DLP}} : \ket{\y} \mapsto \ket{(\log_g(\y) \text{ mop } p) + 1} = \ket{\y'}.
\end{equation}

Now, even the knowledge of $\y$ and a description of the quantum circuit do not help identify $\y'$ classically, under the classical hardness assumption of DLP. Moreover, we still retain the hardness of Fourier-shadowing the resulting model $f_{\y}(\x) = \Tr[\rho(\x)O(\y)]$ from the same database-search arguments. And finally, the flipped model associated to $f_{\y}$ still benefits from the same efficient shadowing procedure, as $O(\y)$ can be prepared on a quantum computer and measured in the computational basis to reveal $\y'$.

\subsection{Shadowfiablility}\label{appdx:shadowfiability}

In our definition of shadowfiable models in the main text (see Def.~\ref{def:shadowfiable}), we take the convention that the shadow model should agree with the original quantum model for all possible inputs $\x\in\X$. This choice makes sense for two reasons:
\begin{enumerate}
    \item We would like the shadowing procedure to work on all potential data distributions, as to be applicable in all learning tasks a given quantum model could be used in.
    \item In the context of machine learning, one typically considers PAC conditions, meaning that the final model should achieve a small error $\E_{\x\sim\D}\abs{h(\x) - g(\x)}$ only with respect to some data distribution $\D$. Note that if the quantum model to be shadowfied achieves these PAC conditions, our demands on worst-case approximation will guarantee that the shadow model achieves them as well.
\end{enumerate}

Nonetheless, one may still be interested in a notion of shadowfiability that considers an average-case error
\begin{equation}
    \E_{\x\sim\D}\big|\widetilde{f}_{\thv}(\x) - f_{\thv}(x)\big|
\end{equation}
with respect to a specified data distribution $\D$. It is not entirely clear which models can still be shadowfied in this way. But our results on the universality of flipped models (Lemma \ref{lem:all-shadowfiable-flipped} in the main text and Lemma \ref{lem:all-shadowfiable-flipped-extended} in the Appendix), as well as on the existence of quantum models that are not shadowfiable (Theorem \ref{thm:not-all-shadowfiable} in the main text and Lemma \ref{thm:not-all-shadowfiable-extended}) would also hold. More precisely, for each of these results, respectively:
\begin{enumerate}
    \item The same proof structure of Lemma \ref{lem:all-shadowfiable-flipped-extended} can be used, as the constructed flipped model only adds a small controllable error to each $\x\in\X$.
    \item One can consider here instead of quantum models that compute arbitrary functions in \textsf{BQP}, a restricted model that computes (single bits of) the discrete logarithm $\log_g(\x) \text{ mod } p$ (analogous to our DCR concept class defined in \cref{eq:DCR-concepts}). The result of Liu \emph{et al.}~\cite{liu20} (Theorem 6 in the Supplementary Information) shows the classical hardness of achieving an expected error $\E_{\x\sim\D}\abs{h(\x) - g(\x)} \leq 1/2 - 1/\text{poly}(n)$ for such target functions (and a hypothesis $h$ producing labels $h(\x)\in\{0,1\}$), under the assumption that \textsf{DLP} $\not\in$ \textsf{BPP}. One can then use this result to show that there exist quantum models that are not average-case shadowfiable under the assumption that \textsf{DLP} $\not\in$ \textsf{P/poly}. 
\end{enumerate}
\end{document}